\title[Local BRST cohomology in (non-)Lagrangian field theory]
{Local BRST cohomology in (non-)Lagrangian field theory}
\author{D.S. Kaparulin,  S.L.  Lyakhovich and A.A. Sharapov}
\address{Department of Quantum Field Theory, Tomsk State University, Lenin ave. 36, Tomsk 634050, Russia.}
\email{dsc@phys.tsu.ru, sll@phys.tsu.ru, sharapov@phys.tsu.ru}
\newtheorem{prop}{Proposition}[section]
\newtheorem{definition}[prop]{Definition}
\newtheorem{theorem}[prop]{Theorem}
\newtheorem{corollary}[prop]{Corollary}
\theoremstyle{remark}
\renewcommand{\simeq}{\cong}
\def\gh{\mathrm{gh}}
\def\vf{\varphi}
\def\deg{\mathrm{deg}}
\def\Deg{\mathrm{Deg}}
\def\pgh{\mathrm{pgh}}
\begin{document}

\maketitle

\begin{flushright}
\emph{Dedicated to the 70th birthday of Igor Victorovich Tyutin}
\end{flushright}

\begin{abstract}
Some general theorems are established on the local BRST cohomology
for not necessarily Lagrangian gauge theories.  Particular
attention is given to the BRST groups with direct physical
interpretation.  Among other things, the groups of rigid
symmetries and conservation laws are shown to be still connected,
though less tightly than in the Lagrangian theory. The connection
is provided by the elements of another local BRST cohomology group
whose elements are identified with Lagrange structures. This
extends the cohomological formulation of the Noether theorem
beyond the scope of Lagrangian dynamics. We show that each
integrable Lagrange structure gives rise to a Lie bracket in the
space of conservation laws, which generalizes the Dickey bracket
of conserved currents known in Lagrangian field theory. We study
the issues of existence and uniqueness of the local BRST complex
associated with a given set of field equations endowed with a
compatible Lagrange structure. Contrary to the usual BV formalism,
such a complex does not always exist for non-Lagrangian dynamics,
and when exists it is by no means unique. The ambiguity and
obstructions are controlled  by certain cohomology classes, which
are all explicitly identified.

\end{abstract}

\section{Introduction}

The BRST approach provides  the most systematic method, sometimes
having no alternatives, for quantizing gauge systems.  The BRST
theory also gives a deep insight into the classical dynamics. In
particular, the classical BRST complex allows one to identify the
cohomological obstructions to switching on consistent interactions
in field theories and provides a cohomological understanding for
Noether's correspondence between symmetries and conservation laws.

The basic ingredient of the BRST theory is a homological vector
field $Q$, called the classical BRST differential, which is to be
associated, in one way or another, to the original classical
system. The practices of casting classical dynamics into the BRST
framework vary from \textit{ad hoc} constructions applicable to
the models of special types to uniform schemes that work equally
well for any Lagrangian or Hamiltonian system. Lagrangian dynamics
can always be put into the BRST framework by means of the BV
formalism \cite{BV}; for any (constrained) Hamiltonian system, the
BRST differential can be constructed by the BFV method \cite{BFV}.
Though both the methods have enjoyed enormous attention in various
reviews\footnote{For a systematic introduction to the field, we
refer to the book \cite{HT}.}, we would like to make here some
general remarks concerning these schemes of the BRST embedding.
The remarks address the key inputs and outputs of the BV and BFV
formalisms that are subject to a revision when the BRST formalism
is extended beyond the scope of Lagrangian/Hamiltonian dynamics.

The BV scheme of the BRST embedding starts with an action
functional, which is supposed to exist for the original classical
equations of motion. The classical BRST differential $Q$, being
constructed by the BV method, has a special feature: it is a
Hamiltonian vector field with respect to the BV antibracket
(canonical odd Poisson bracket), i.e., $Q=(S, \,\cdot\,)$. The BV
master action $S$ is constructed by solving the BV master equation
$(S, S)=0$ in the field-antifield space. The latter space is
constructed by adding ghosts and antifields to the original field
content. The original action functional provides the boundary
condition for the master action in the extended space.  Notice
that the BV prescription for the field-antifield space
construction essentially relies on the fact that the original
equations are Lagrangian, in which case the gauge symmetry
generators coincide with the generators of Noether identities
(Noether's second theorem). Beyond the scope of Lagrangian
dynamics, the latter property is not true anymore and one can find
many counterexamples.  As a result, not just the boundary
condition for the master equation is problematic to define, the BV
construction of the field-antifield space does not apply to a
system defined by non-Lagrangian equations. So, the canonical
antibracket cannot be a general tool for constructing the BRST
differential beyond the class of Lagrangian systems.

Somewhat similar scenario of the BRST embedding is played in the
BFV formalism. The basic input there is the (first class
constrained) Hamiltonian form of classical dynamics. The BRST
differential has the Hamiltonian form $Q=\{\Omega,\, \cdot \, \}$
on the ghost extension of the original phase space. The BRST
charge $\Omega$ is sought from the BFV master equation $\{\Omega,
\Omega \}=0$. The boundary condition for $\Omega$ is provided by
the first class constraints of the original system. The ghost
extension of the original phase space appeals to the two facts:
(i) the original phase space has been already equipped with a
Poisson bracket such that the evolutionary equations are
Hamiltonian, and (ii) there is a pairing between the first class
constraints and gauge symmetry generators. For a general system of
(constrained) evolutionary equations either assumption may be
invalid, so that the BFV framework cannot be directly applied to
general dynamical systems.

It should be noted that in classical theory, the Hamiltonian
structure of the BRST differential $Q$ is not always relevant.
Many of the deliverables of classical BRST complex appeal to its
properness (see Definition \ref{properness} in the next section)
rather than to the Hamiltonian form of the BRST differential. The
BFV/BV \textit{quantization}, however, essentially employs the
(anti)bracket, which is `a must' ingredient of the quantum BRST
operator.

In the papers \cite{LS}, \cite{KazLS}, \cite{LS1}, \cite{LS2}, a
general scheme has been worked out for the BRST embedding and
quantization of dynamical systems defined by general equations of
motion, not necessarily Lagrangian or Hamiltonian. In \cite{LS}, a
ghost extension of the original configuration space of fields has
been proposed in such a way as to implement the equations of
motion through the cohomology of a proper BRST differential $Q$.
The differential is iteratively constructed by solving the
equation $Q^2=0$ with the usual tools of homological perturbation
theory \cite{HT}, \cite{FHST}, \cite{FH}. In the irreducible case,
the boundary condition for $Q$ is provided by the classical
equations of motion together with the generators of their gauge
symmetries and Noether identities, with no pairing assumed between
the two groups of generators. This construction needs no bracket
to define the BRST differential, and it is sufficient to develop
many of the standard applications of the BRST theory to classical
dynamics.

In \cite{LS}, \cite{KazLS}, two extra structures were proposed to
define a consistent quantization of non-Hamiltonian/non-Lagrangian
dynamics. These are called respectively the weak Poisson structure
\cite{LS} and the Lagrange structure \cite{KazLS}. From the pure
algebraic viewpoint, either structure extends the classical BRST
differential $Q$, respectively, to $P_\infty$- or
$S_\infty$-algebra \footnote{Pure algebraic definitions of
$P_\infty$ and $S_\infty$ can be found in \cite{V}.}, where $Q$ is
taken to be the first structure map. The second structure map
defines a bracket (even for $P_\infty$ and odd for $S_\infty$)
which is compatible with the BRST differential and satisfies the
Jacobi identity up to a homotopy correction governed by $Q$. The
usual BV and BFV formalisms fit in this algebraic picture as very
special cases where the structure maps $P_k$ and $S_k$ vanish for
all $k>2$ and the brackets associated to $P_2$ and $S_2$ are
non-degenerate.

It should be noted that the existence of a compatible  Lagrange
structure is a much more relaxed condition for the classical
equations of motion than the requirement to be derivable from the
action principle. Many examples are known of non-Lagrangian field
theories admitting nontrivial Lagrange structures that allow one
to construct a reasonable quantum theory \cite{KazLS}, \cite{LS1},
\cite{LS2}, \cite{LS3}, \cite{KLS}, \cite{BG}, \cite{KLS1}. In all
the examples, the Lagrange structure enjoys the property of
space-time locality.

The concept of  space-time locality is of a paramount importance
for quantum field theory. For a long time, the locality of the
BRST differential remained a plausible hypothesis. It has been
proven later \cite{H} that this hypothesis is actually
superfluous: no obstructions can appear to the existence of a
local BV master action or a BRST charge provided the original
classical dynamics are local and regular. Further studies of the
local BRST cohomology revealed its remarkable properties together
with many important applications to field-theoretical problems,
see \cite{BBH} for review. In particular, imposing the locality
condition gives birth to many interesting groups of the BRST
cohomology (e.g. those describing the rigid symmetries and
conservation laws), which otherwise vanish identically.

As the BRST theory is now available for not necessarily Lagrangian
dynamics, it becomes an issue to study the local BRST cohomology
for this more general class of systems.  It is the topic we
consider in this paper. The addressed issues include all the main
problems concerning the local BRST cohomology solved earlier in
the Lagrangian setting, some of which are posed in a slightly
different way, and some others result in different conclusions.
The main results can be summarized as follows:

\begin{itemize}

\item A local classical BRST differential can always be defined
for any local gauge theory, be it Lagrangian or not. Incorporation
of a nontrivial Lagrange structure in the BRST complex is
generally obstructed by some classes of the local BRST cohomology.
We identify these classes and show that their number is finite.
Thus, unlike the usual BRST theory for Lagrangian gauge systems,
the space-time locality of non-Lagrangian equations of motion and
a compatible Lagrange structure is not generally sufficient for
the existence of a local BRST complex.

\item Certain vanishing theorems for and general relations between
various local BRST cohomology groups are established for general
non-Lagrangian gauge theories. Some of these groups admit
immediate physical interpretations linking them to the spaces of
characteristics (conservation laws), rigid symmetries, and
Lagrange structures.

\item A cohomological version of Noether's first theorem,
connecting conservation laws to rigid symmetries, is formulated
and proven for not necessarily Lagrangian gauge theories endowed
with a Lagrange structure. For general non-Lagrangian theories,
the rigid symmetries and conserved currents are not so tightly
related to each other as in the Lagrangian case, and a particular
form of this relation strongly depends on the choice of Lagrange
structure.

\item We extend the Dickey bracket of conserved currents to the
case of non-Lagrangian gauge theories equipped with integrable
Lagrange structures. Furthermore, the Lagrange structure is shown
to provide a homomorphism of the Dickey algebra of conservation
laws to the Lie algebra of rigid symmetries. We also introduce the
higher derived brackets that allow one to generate new
characteristics, symmetries, and Lagrange structures from
previously known characteristics.

\end{itemize}

The paper is organized as follows. In the next section, we briefly
review the definition of the BRST complex that serves any gauge
theory, be it Lagrangian or not. A special emphasis is placed on
space-time locality, including the definitions of local $p$-forms
and functionals of fields the BRST differential acts upon. Section 3
provides some auxiliary facts from homological algebra we use to
study various local BRST cohomology groups. In Section 4, we prove
several general theorems on the local BRST cohomology. The use of
spectral sequence arguments considerably shortens the proofs as
compared to the previously known Lagrangian counterparts. In Section
5, we elaborate on physical interpretation of certain groups of
local BRST cohomology. These are the groups that are naturally
identified with the spaces of rigid symmetries, conservation laws,
and Lagrange structures. Here we also elucidate the structure of the
BRST differential, more precisely its Koszul-Tate part, from the
viewpoint of the underlying gauge dynamics.  In Section 6, we
discuss some natural operations one can define for the local BRST
cohomology, the most important of which is the Dickey bracket on the
space of conservation laws. Section 7 deals with the issues of
existence and uniqueness for the local BRST complex. Contrary to the
usual BV formalism, the existence of local BRST complex is generally
obstructed by certain classes of the local BRST cohomology, which we
explicitly identify. In the last Section 8, we summarize our
results. Appendix A contains some technical details taken out of
Sections 2 and 6.

\section{A non-Lagrangian BRST complex}\label{1}

This section provides a glossary on the BRST theory of
non-Lagrangian gauge systems with special emphasis on space-time
locality. For a systematic account of the theory as well as its
application to the path-integral quantization of various
non-Lagrangian gauge models the reader is referred  to the
original papers \cite{KazLS}, \cite{LS1}, \cite{LS2}, \cite{LS3},
\cite{KLS1}.

In the local setting, a \textit{proper gauge system of type}
$(n,m)$ is defined by the following data.

\vspace{3mm}\noindent{\textbf{Source}}: A smooth manifold $X$ of
dimension $\Delta$ endowed with a suitable volume form $v$. We
will write $\{x^\mu\}_{\mu=1}^\Delta$ for a system of local
coordinates on $X$ and denote by $\Lambda=\bigoplus_p\Lambda^p(X)$
the exterior algebra of differential forms on $X$.

\vspace{3mm}\noindent\textbf{Target}: The cotangent bundle $T^\ast
M$ of a graded supermanifold endowed with the canonical symplectic
structure. The local coordinates on the base $M$ and the linear
coordinates on the fibers are denoted respectively by
\begin{equation}\label{phi-barphi}
\vf^I=(\vf^{i_k},\vf_{i_l})\,,\qquad \bar\vf_I=(\bar\vf_{i_k},
\bar\vf{}^{i_l})\,,\qquad k=0,\ldots,m\,,\quad l=1,\ldots,n+1\,,
\end{equation}
so that the canonical Poisson bracket on $T^\ast M$ is given by
\begin{equation}
\{\vf^I,\bar\vf_J\}=\delta_J^I\,.
\end{equation}
We will also use the collective notation $\phi^A=(\vf^I,
\bar\vf_J)$ for the whole set of local coordinates on $T^\ast M$.
The various gradings prescribed to the coordinates
(\ref{phi-barphi}) are collected  in Table 1.

\vspace{5mm}
\begin{center}
\begin{tabular}{|l|l|l|l|}
  \hline
    ghost number & momentum degree & resolution degree & pure ghost number \\
  \hline
   $\gh \,\vf^{i_k}=k$ &  $ \Deg \,\varphi^{i_k}=0$  &$ \deg \,\vf^{i_k}=0$  &$ \pgh \,\vf^{i_k}=k$   \\
  $\gh\,\bar\vf{}_{i_k}=-k $  &$  \Deg\, \bar\vf_{i_k}=1$    & $\deg\,\bar\vf_{i_k}=k+1 $& $\pgh \,\bar\vf_{i_k}=0$  \\
   $\gh\,\vf_{i_l}=-l$ & $ \Deg\, \vf_{i_l}=0$ & $\deg\,\vf_{i_l}=l$ &$ \pgh\,\vf_{i_l}=0$  \\
   $\gh\, \bar\vf{}^{i_l}=l$ & $\Deg\,\bar\vf{}^{i_l}=1  $&$ \deg\, \bar\vf{}^{i_l}=0$ & $ \pgh \,\bar\vf{}^{i_l}=l-1$\\
  \hline
\end{tabular}

\vspace{5mm} \textsc{Table} 1. The gradings of local coordinates
on the target space.
\end{center}

\vspace{5mm} To avoid cumbersome sign factors we assume that the
Grassmann parity of the coordinates is compatible with the ghost
number, i.e., the even coordinates have even ghost numbers and the
odd coordinates have odd ghost numbers.  In physical terms this
means that we consider gauge theories without fermionic degrees of
freedom. The results are easily extended to the general case of a
gauge theory with both bosonic and fermionic fields.

The four gradings of local coordinates are not actually
independent. Comparing the columns of Table 1, one can see that
\begin{equation}\label{N}
\pgh+\Deg=\gh+\deg\,.
\end{equation}

\vspace{3mm}\noindent\textbf{Fields}: Smooth maps from $X$ to
$T^\ast M$. The fields can be considered as points of an (infinite
dimensional) manifold $\mathcal{M}$. The local coordinate systems
on $\mathcal{M}$ are identified with the local coordinate
expressions of maps $\phi: X\rightarrow T^\ast M$. The manifold
$\mathcal{M}$ inherits the gradings and the symplectic structure
of $T^\ast M$. Namely, if $\phi: X\rightarrow T^\ast M$ is given
locally by $\phi^A(x)=(\vf^J(x)$, $\bar \vf_J(x))$, then the
symplectic 2-form on $\mathcal{M}$ reads
\begin{equation}\label{symstr}
\omega=\int_Xv(\delta\bar\vf_J\wedge\delta\vf^J)\,.
\end{equation}
Here the wedge denotes  the exterior product of variational
differentials.

From the field-theoretical viewpoint, the ``canonical momenta''
$\bar\vf_I(x)$ play the role of sources to the ``genuine fields''
$\vf^I(x)$ on the space-time manifold $X$.
  If $X$ is a manifold with boundary,
some boundary conditions on $\phi$'s should be imposed. For our
purposes it is sufficient to suppose all the sources vanishing on
the boundary together with all their derivatives,
\begin{equation}\label{bcon}
    \partial_{\mu_1}\cdots\partial_{\mu_n}\bar\vf_{I}|_{\partial
    X}=0\,,\qquad n=0,1,2,\ldots\,.
\end{equation}

In this paper, we are interested in gauge fields whose dynamics is
governed by local equations of motion. The standard way to
incorporate locality is to use the formalism of jet spaces
\cite{Olv}. Let us regard $\mathcal{M}$ as the space of sections
of the trivial fiber bundle $E=X\times T^\ast M$ over $X$ and  let
$J^{\infty}E$ denote the infinite jet bundle of $E$ over $X$. Each
smooth section of $E$, that is, a field $\phi\in \mathcal{M}$,
induces the section $j^{\infty}\phi$ of $J^\infty E$. By a {local
function} on $J^\infty E$ we mean the pullback of a smooth
function on some finite jet bundle $J^pE$ with respect to the
canonical projection $J^\infty E\rightarrow J^pE$. Similarly, by a
\textit{local function of fields} $\phi$ on $X$ we will mean the
pullback of a local function on $J^\infty E$ via the section
$j^\infty\phi: X \rightarrow J^\infty E$. The notion of local
function on $X$ can be further extended to the notion of a
\textit{local $k$-form} on $X$ by considering the differential
forms on $X$ with coefficients that are local functions of fields
$\phi$. In terms of local coordinates a local $k$-form on $X$
reads
\begin{equation}\label{}
\omega=\omega(x, \phi(x),\partial_\mu \phi(x),\ldots,
\partial_{\mu_1}\cdots\partial_{\mu_p}\phi(x))_{\mu_1\cdots
\mu_k}dx^{\mu_1}\wedge\cdots\wedge dx^{\mu_k}\,.
\end{equation}
The local forms constitute the exterior differential algebra
$\mathcal{A}=\bigoplus \mathcal{A}^{g,k}_{m,n}$ graded by the
ghost number $g$, momentum degree $m$, pure ghost number $n$, and
form degree $k$. The volume form $v$ on $X$ defines the natural
isomorphism $f_v: \mathcal{A}^{g,0}_{m,n}\rightarrow
\mathcal{A}^{g,\Delta}_{m,n}$. Finally, we define the
\textit{local functionals} to be the integrals over $X$ of local
$\Delta$-forms; in doing so, two local functionals are considered
to be  equivalent if their integrands differ by the exterior
differential of a local $(\Delta-1)$-form. For the functionals  of
positive momentum degree this equivalence relation, having  the
form of equality, is implicit in the boundary conditions
(\ref{bcon}). Denoting the graded vector space of local
functionals by $\mathcal{F}=\bigoplus \mathcal{F}_{m,n}^g$, one
can think of the integration over $X$ as a linear map
$$
\begin{array}{c}
\int_X :
\mathcal{A}^{g,\Delta}_{m,n}/d\mathcal{A}^{g,\Delta-1}_{m,n}\;\rightarrow\;
\mathcal{F}_{m,n}^g\,.
\end{array}
$$
This map is actually an isomorphism of vector spaces, see Theorem
\ref{APL} below.

The Poisson bracket associated to the symplectic structure
(\ref{symstr}) equips $\mathcal{F}$ with the structure of a graded
Lie algebra. Furthermore,  the bracket defines the Hamiltonian
action of $\mathcal{F}$ on the space of local functions. This
action then naturally extends to the action on the whole algebra
of local forms, so that we can think of $\mathcal{A}$ as a graded
module over the Lie algebra $\mathcal{F}$.

 \vspace{3mm}\noindent\textbf{BRST charge}: An odd local functional $\Omega$ of ghost
 number 1 satisfying the following two conditions:
 \begin{equation}\label{Omega}
 \mathrm{Deg}\,\Omega>0 \,,\qquad\{\Omega,\Omega\}=0\,,
 \end{equation}
plus a \textit{properness condition} to be specified below (see
Definition \ref{properness}). The first condition in (\ref{Omega})
can also be written as $\Omega|_{\bar\vf_I=0}=0$ and the second
one is known as the \textit{master equation}.

Define the \textit{BRST differential} $s$ on local functions from
$\mathcal{A}$ by
\begin{equation}
sA=\{\Omega, A\}\,, \qquad \gh\,s=1\,.
\end{equation}
By the Jacobi identity for the Poisson brackets, the BRST
differential squares to zero, $s^2=0$, and  endows the algebra
$\mathcal{A}$ with the structure of increasing cochain complex
with respect to the ghost number. This complex is called the
\textit{BRST complex} of a gauge system. The action of $s$ extends
trivially  from local functions to the whole algebra of local
forms by setting $s (dx^\mu)=0$ and, by the universal coefficient
theorem \cite{Mac}, we have the isomorphism of cohomology groups
$$H^g(s, \mathcal{A}^{\bullet,k})\simeq
H^g(s,\mathcal{A}^{\bullet, 0})\otimes \Lambda^k(X)\,.$$

\begin{prop}\label{p1} The expansion of the BRST differential
according to the resolution degree  is given by
\begin{equation}\label{sps}
s=\delta+ \stackrel{_{(0)}}{s}+\stackrel{_{(1)}}{s}+\cdots\,,
\end{equation}
where
$$
\mathrm{deg}\, \delta=-1\,,\qquad \mathrm{deg}\,
\stackrel{_{(n)}}{s}=n\,,\qquad \delta^2=0\,.
$$
\end{prop}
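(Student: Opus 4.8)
The plan is to expand everything by resolution degree and to show that, although $s$ is inhomogeneous in this grading, it cannot decrease the resolution degree by more than one unit. First I would record how the canonical bracket interacts with the gradings of Table~1. Because $\{\vf^I,\bar\vf_J\}=\delta^I_J$ contracts each coordinate against its conjugate, and because every conjugate pair $(\vf^I,\bar\vf_I)$ carries total momentum degree $1$ and total ghost number $0$, the bracket obeys $\Deg\{\Omega,A\}=\Deg\Omega+\Deg A-1$ and $\gh\{\Omega,A\}=\gh\Omega+\gh A$. Writing $\Omega=\sum_{p\ge 1}\Omega_p$ for its decomposition into monomials of definite momentum degree $p=\Deg\Omega_p$, with $p\ge1$ forced by the condition $\Deg\Omega>0$ in (\ref{Omega}), the operator $\{\Omega_p,\,\cdot\,\}$ then shifts $\Deg$ uniformly by $p-1$ and raises $\gh$ by one.

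Next I would convert momentum bookkeeping into resolution-degree bookkeeping using the identity (\ref{N}), which reads $\deg=\pgh+\Deg-\gh$. For a single term of $\{\Omega_p,\,\cdot\,\}$ contracting a conjugate pair $P$, a direct count shows that the resolution degree is shifted by $\deg\Omega_p-d_P$, where $d_P$ is the total resolution degree carried by $P$. The whole claim therefore reduces to the inequality $d_P\le\deg\Omega_p+1$ for every pair $P$ that can actually be contracted against a given monomial $\Omega_p$, since this is exactly what rules out a shift below $-1$.

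The core of the argument --- and the step I expect to be the main obstacle --- is the verification of this inequality by a short case analysis over the two types of conjugate pairs appearing in (\ref{phi-barphi}). For a contraction to be nonzero, $\Omega_p$ must contain one of the two coordinates of $P$. If it contains the coordinate of $P$ whose resolution degree equals $d_P$ (namely $\bar\vf_{i_k}$, with $\deg=k+1$, or $\vf_{i_l}$, with $\deg=l$), then $\deg\Omega_p\ge d_P$ and the shift is nonnegative. In the complementary case $\Omega_p$ contains the partner carrying pure ghost number ($\vf^{i_k}$, with $\pgh=k$, or $\bar\vf^{i_l}$, with $\pgh=l-1$); then $\pgh\Omega_p\ge k$ (resp.\ $\ge l-1$), and combined with $p\ge1$ and (\ref{N}) this gives $\deg\Omega_p+1=\pgh\Omega_p+p\ge k+1=d_P$ (resp.\ $\ge l=d_P$). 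Here the momentum bound $\Deg\Omega>0$ enters in an essential way: it is precisely $p\ge1$ that closes the estimate and forbids any term lowering the resolution degree by two or more. Collecting the homogeneous pieces yields the expansion $s=\delta+\stackrel{_{(0)}}{s}+\stackrel{_{(1)}}{s}+\cdots$ with $\deg\delta=-1$ and $\deg\stackrel{_{(n)}}{s}=n$.

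Finally, $\delta^2=0$ is immediate from the master equation. Expanding $s^2=0$ by resolution degree, the component of lowest possible shift $-2$ can only arise from composing two operators of shift $-1$, so it equals $\delta^2$; since $s^2$ vanishes identically, this component vanishes as well.
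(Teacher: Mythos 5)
Your argument is correct and follows essentially the same route as the paper's proof in Appendix A: your case split over which member of a conjugate pair occurs in $\Omega_p$ is precisely the paper's decomposition of the Hamiltonian vector field $X_\Omega=V_\Omega+U_\Omega$, with the "resolution-degree-carrying" contractions giving $\deg U_\Omega\ge 0$ trivially and the complementary ones bounded below by $-1$ via the identity $\pgh+\Deg=\gh+\deg$ combined with $\Deg\,\Omega>0$. The closing observation that $\delta^2=0$ is the lowest resolution-degree component of $s^2=0$ is also the intended one.
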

The only nontrivial part of the assertion is that the expansion
(\ref{sps}) is bounded from below by resolution degree -1. The
proof is given in Appendix A. The operator $\delta$, called the
\textit{Koszul-Tate differential}, makes the algebra $\mathcal{A}$
into a decreasing cochain complex with respect to the resolution
degree. Denote by $H^{(r)}(\delta)$, $r\geq 0$, the corresponding
cohomology groups.

\begin{definition}\label{properness}
The BRST charge $\Omega$ is said to be proper if
$$
H^{(0)}(\delta)\neq 0\quad \mbox{and}\quad H^{(r)}(\delta)=0\quad
\forall r>0\,.
$$
\end{definition}
In all the following, only the proper BRST charges are considered.

Consider now the expansion of the BRST charge in powers of
sources. By (\ref{Omega}) we have
\begin{equation}\label{Om-exp}
\Omega= \Omega_1+\Omega_2+ \Omega_3+\cdots \,, \qquad \mathrm{Deg}
\, \Omega_n=n\,,
\end{equation}
and
\begin{equation}\label{ClBRST}
\{\Omega_1,\Omega_1\}=0\,,\qquad \{\Omega_1,\Omega_2\}=0\,,\qquad
\{\Omega_2,\Omega_2\}=-2\{\Omega_1,\Omega_3\}\,,\quad\ldots\,.
\end{equation}
The leading term $\Omega_1$ is called the \textit{classical BRST
charge}. It generates  the Hamiltonian vector field
$s_0=\{\Omega_1,\,\cdot\,\}$ called the \textit{classical BRST
differential}\footnote{Notice that the differential $s_0$ is
completely determined by its restriction onto the subalgebra of
local functions with zero momentum degree. This restriction is
also called the classical BRST differential \cite{KazLS}.}. By
virtue of the first relation in (\ref{ClBRST}), $s_0^2=0$. As for
the BRST differential $s$, the action of $s_0$ extends trivially
from the local functions to the local forms, giving the space
$\mathcal{A}$ the structure of cochain complex with respect to the
ghost number. The differential $s_0$ carries all the information
about the classical dynamics, hence the name. The higher order
terms in the expansion (\ref{Om-exp}) can be viewed as a
deformation of the classical BRST charge. The deformation becomes
crucial at the level of quantization. In this paper, however, our
main concern will be in classical aspects of the BRST theory under
consideration.

\section{Some auxiliary facts and constructions from homological algebra}

\subsection{An exact sequence for relative cohomology groups}
Given a bicomplex $C=\bigoplus_{q,r} C^{q,r}$ with differentials
\begin{equation}\label{dd}
    d: C^{q,r}\rightarrow C^{q+1,r}\quad\mbox{and}\quad \delta
    : C^{q,r}\rightarrow C^{q,r-1}\,,
\end{equation}
one can define various cohomology  groups $H(d)$, $H(\delta)$,
$H(d|\delta)$, and $H(\delta|d)$. The definition of the first two
groups is standard \cite{Mac}, while the last two groups describe
the relative cohomology of $d$ modulo $\delta$ and $\delta$ modulo
$d$, respectively. More precisely, $H(d|\delta)$ is the ordinary
cohomology group of the quotient complex $C/\delta C$ with the
differential induced by $d$ and $H(\delta|d)$ describes the
cohomology of $\delta$ in the quotient  $C/dC$.

\begin{theorem}[\cite{DVHTV}]
If $H^{q,r}(d)=0$ for some value of $(q,r)$, there exists an exact
sequence
\begin{equation}\label{esd}
0 \rightarrow H^{q,r+1}(d|\delta)\rightarrow
H^{q,r+1}(\delta|d)\rightarrow H^{q+1,r+1}(\delta) \rightarrow
H^{q+1,r+1}(\delta|d) \rightarrow H^{q,r}(d|\delta)\rightarrow 0\,.
\end{equation}
\end{theorem}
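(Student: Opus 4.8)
The plan is to realize all five groups as explicit subquotients of the one graded space $C$ and then to exhibit the four maps of the sequence by hand, checking exactness by a direct diagram chase in which the single hypothesis $H^{q,r}(d)=0$ is invoked repeatedly at one and the same bidegree. First I would record the subquotient descriptions. Writing $Z^{q,r}_{d|\delta}=\{c\in C^{q,r}:dc\in\delta C\}$ and $Z^{q,r}_{\delta|d}=\{c\in C^{q,r}:\delta c\in dC\}$ for the two relative cocycle spaces, the definitions of the quotient complexes $C/\delta C$ and $C/dC$ give at once
\[
H^{q,r}(d|\delta)=Z^{q,r}_{d|\delta}\big/B^{q,r},\qquad
H^{q,r}(\delta|d)=Z^{q,r}_{\delta|d}\big/B^{q,r},\qquad
B^{q,r}:=dC^{q-1,r}+\delta C^{q,r+1}\,.
\]
The structural fact that makes everything work is that the two relative groups at a fixed bidegree have the \emph{same} denominator $B^{q,r}$ and differ only in their numerators.

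Next I would define the maps. The two ``easy'' maps come from the differentials and from inclusion of cocycles: $\beta\colon H^{q,r+1}(\delta|d)\to H^{q+1,r+1}(\delta)$ sends $[b]\mapsto[db]$, which is a genuine $\delta$-cocycle since $\delta\,db=-d\,\delta b=0$ once $\delta b\in dC$; and $\gamma\colon H^{q+1,r+1}(\delta)\to H^{q+1,r+1}(\delta|d)$ is the natural projection of a $\delta$-cocycle to its class modulo $dC$. The two maps carrying the content are the connecting map $\epsilon\colon H^{q+1,r+1}(\delta|d)\to H^{q,r}(d|\delta)$, defined by picking, for a representative $e$ with $\delta e=df$, the element $f\in C^{q,r}$ and setting $\epsilon[e]=[f]$; and the first map $\alpha\colon H^{q,r+1}(d|\delta)\to H^{q,r+1}(\delta|d)$.

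The crucial lemma, and the one place where $H^{q,r}(d)=0$ is genuinely used, is that $\alpha$ is simply induced by an inclusion of numerators: if $a\in C^{q,r+1}$ satisfies $da\in\delta C$, then $d(\delta a)=-\delta(da)\in\delta\delta C=0$, so $\delta a\in C^{q,r}$ is $d$-closed, and the hypothesis $H^{q,r}(d)=0$ forces $\delta a\in dC$, i.e. $a\in Z^{q,r+1}_{\delta|d}$. Thus $Z^{q,r+1}_{d|\delta}\subseteq Z^{q,r+1}_{\delta|d}$, and since the two groups share the denominator $B^{q,r+1}$, this descends to the injection $\alpha$, giving exactness at the left end for free. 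The same vanishing makes $\epsilon$ well defined, since two admissible choices of $f$ differ by a $d$-closed element of $C^{q,r}$, hence by $dC$, hence by $0$ in $H^{q,r}(d|\delta)$. With the maps in hand the interior exactness is a chase: $\ker\beta$ consists of those $[b]$ with $db\in\delta C$, i.e. $b\in Z_{d|\delta}$, which is exactly $\mathrm{im}\,\alpha$; exactness at $H^{q+1,r+1}(\delta)$ and at $H^{q+1,r+1}(\delta|d)$ both run through the same device, whereby a $\delta$-cocycle lying in $dC+\delta C$ is written as $dc'+\delta(\cdots)$, after which $\delta c'$ is seen to be $d$-closed in bidegree $(q,r)$ and the vanishing theorem upgrades it to $d$-exact, producing the needed preimage; and surjectivity of $\epsilon$ is immediate from the definition of $Z_{d|\delta}$.

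The one genuinely delicate point, and where I would spend most care, is the bidegree bookkeeping needed to confirm that \emph{every} ``closed $\Rightarrow$ exact'' step (in defining $\alpha$ and $\epsilon$, and in the two interior exactness arguments) concerns a $d$-cocycle living precisely in bidegree $(q,r)$. It is exactly this coincidence that lets the single assumption $H^{q,r}(d)=0$ power the whole sequence: a slip of one degree would either break the well-definedness of $\epsilon$ or destroy the inclusion underlying $\alpha$. Everything else reduces to routine diagram chasing once the four maps and the common-denominator description are established; alternatively, the same sequence can be read off as a truncated zig-zag of the $d$-cohomology long exact sequence of $0\to\delta C\to C\to C/\delta C\to 0$, but I find the direct construction cleaner to control.
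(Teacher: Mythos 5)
The paper does not prove this statement at all --- it is quoted verbatim from Dubois-Violette--Henneaux--Talon--Viallet \cite{DVHTV} and used as a black box --- so there is no internal proof to compare against; your job was effectively to reconstruct the cited result, and your reconstruction is correct. The common-denominator observation $H^{q,r}(d|\delta)=Z^{q,r}_{d|\delta}/B^{q,r}$, $H^{q,r}(\delta|d)=Z^{q,r}_{\delta|d}/B^{q,r}$ with $B^{q,r}=dC^{q-1,r}+\delta C^{q,r+1}$ is exactly the right organizing device, the four maps are the standard ones, and you have located the uses of $H^{q,r}(d)=0$ correctly: it gives the inclusion $Z^{q,r+1}_{d|\delta}\subseteq Z^{q,r+1}_{\delta|d}$ underlying $\alpha$ (hence injectivity at the left end), the well-definedness of the connecting map $\epsilon$ (two choices of $f$ with $df=\delta e$ differ by a $d$-cocycle in bidegree $(q,r)$), and the lifting step in exactness at $H^{q+1,r+1}(\delta)$. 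One small imprecision: exactness at $H^{q+1,r+1}(\delta|d)$ does \emph{not} in fact run through the vanishing hypothesis --- if $\epsilon[e]=0$ one writes $f=dg+\delta h$ and observes directly that $e+dh$ is an honest $\delta$-cocycle representing the same relative class --- so your blanket claim that both interior exactness arguments use ``the same device'' overstates where the hypothesis enters; this does not affect correctness. Your closing remark is also apt: the sequence is the truncation of the $d$-cohomology long exact sequence of $0\to\delta C\to C\to C/\delta C\to 0$ at the two spots where $H(d)$ vanishes, which is essentially how \cite{DVHTV} obtain it, so your hands-on construction and the cited derivation are two presentations of the same argument.
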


\begin{corollary}\label{Cor1}
If in the exact sequence above the middle group
$H^{q+1,r+1}(\delta)$ is trivial, then
$$
H^{q,r+1}(d|\delta)\simeq H^{q,r+1}(\delta|d)\quad \mbox{and}
\quad H^{q+1,r+1}(\delta|d) \simeq H^{q,r}(d|\delta)\,.
$$
\end{corollary}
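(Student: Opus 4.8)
The plan is to read off the corollary directly from the exact sequence of the preceding theorem by imposing the hypothesis that the middle term vanishes. Concretely, the theorem gives the six-term exact sequence
\[
0 \rightarrow H^{q,r+1}(d|\delta)\rightarrow
H^{q,r+1}(\delta|d)\rightarrow H^{q+1,r+1}(\delta) \rightarrow
H^{q+1,r+1}(\delta|d) \rightarrow H^{q,r}(d|\delta)\rightarrow 0
\]
valid whenever $H^{q,r}(d)=0$. The only extra input of the corollary is $H^{q+1,r+1}(\delta)=0$, so the first move is simply to substitute $0$ for the middle group in this sequence.

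Once the middle term is zero, the long sequence splits at that point into two short exact sequences. I would argue that the map $H^{q,r+1}(\delta|d)\rightarrow H^{q+1,r+1}(\delta)$ is now the zero map (its target is trivial), so by exactness at $H^{q,r+1}(\delta|d)$ the preceding injection $H^{q,r+1}(d|\delta)\rightarrow H^{q,r+1}(\delta|d)$ is also surjective, hence an isomorphism; this gives the first claimed identification $H^{q,r+1}(d|\delta)\cong H^{q,r+1}(\delta|d)$. Symmetrically, the map $H^{q+1,r+1}(\delta)\rightarrow H^{q+1,r+1}(\delta|d)$ emanates from the trivial group and is therefore zero, so by exactness at $H^{q+1,r+1}(\delta|d)$ the subsequent map $H^{q+1,r+1}(\delta|d)\rightarrow H^{q,r}(d|\delta)$ is injective; combined with its surjectivity (it is the last nonzero arrow before the terminal $0$), it too is an isomorphism, yielding $H^{q+1,r+1}(\delta|d)\cong H^{q,r}(d|\delta)$.

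In short, the entire argument is the standard homological fact that an exact sequence remains exact after deleting a zero object, and that the two arrows flanking a zero object become isomorphisms onto/out of their neighbours. There is essentially no obstacle here: the corollary is a purely formal consequence of exactness, requiring no new computation, and the hypothesis $H^{q,r}(d)=0$ needed to invoke the theorem is inherited unchanged. The one point worth stating carefully is that deleting the middle term does not merely split the sequence but forces the two bordering maps to be isomorphisms, and this follows immediately from exactness at the four relevant spots. I expect the write-up to be a few lines of diagram-chasing with no hard step.
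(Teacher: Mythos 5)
Your proposal is correct and is exactly the (implicit) argument the paper relies on: the corollary is stated without proof as an immediate consequence of the six-term exact sequence, and setting the middle term to zero forces the two flanking maps to be isomorphisms by exactness, just as you describe. Nothing is missing.
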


\begin{corollary}\label{Cor2}
If
$H^{q,r}(d)=H^{q+1,r+1}(\delta)=H^{q+1,r}(d)=H^{q+2,r+1}(\delta)=0$,
then
$$
H^{q+1,r+1}(d|\delta)\simeq H^{q,r}(d|\delta)\,.
$$
\end{corollary}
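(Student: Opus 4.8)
The plan is to derive Corollary~\ref{Cor2} by applying Corollary~\ref{Cor1} twice and chaining the resulting isomorphisms. The statement relates $H^{q+1,r+1}(d|\delta)$ to $H^{q,r}(d|\delta)$, and the four vanishing hypotheses are exactly the two pairs needed to invoke Corollary~\ref{Cor1} at two different places in the bidegree lattice.

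First I would apply Corollary~\ref{Cor1} with the bidegree shifted up by one in the $q$-direction, i.e.\ at position $(q',r')=(q+1,r)$. The hypotheses for that application are $H^{q+1,r}(d)=0$ (to validate the exact sequence \eqref{esd} at this bidegree) together with the triviality of the middle group, which is $H^{q'+1,r'+1}(\delta)=H^{q+2,r+1}(\delta)=0$. Corollary~\ref{Cor1} then yields the isomorphism coming from its second displayed relation,
\begin{equation*}
H^{q+2,r+1}(\delta|d)\cong H^{q+1,r}(d|\delta)\,.
\end{equation*}
Next I would apply Corollary~\ref{Cor1} at the original position $(q,r)$, using the remaining two hypotheses $H^{q,r}(d)=0$ and $H^{q+1,r+1}(\delta)=0$. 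Its first displayed relation gives $H^{q,r+1}(d|\delta)\cong H^{q,r+1}(\delta|d)$, and its second gives
\begin{equation*}
H^{q+1,r+1}(\delta|d)\cong H^{q,r}(d|\delta)\,.
\end{equation*}

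The key to closing the argument is to recognize that the left-hand sides of the two isomorphisms I have produced must be matched up. The group $H^{q+1,r+1}(d|\delta)$ that appears in the target of Corollary~\ref{Cor2} should be identified, via the \emph{first} relation of Corollary~\ref{Cor1} applied at the shifted position $(q+1,r)$, with $H^{q+1,r+1}(\delta|d)$; concretely, that first relation reads $H^{q+1,r+1}(d|\delta)\cong H^{q+1,r+1}(\delta|d)$. Composing this with the second relation obtained at position $(q,r)$ then gives
\begin{equation*}
H^{q+1,r+1}(d|\delta)\cong H^{q+1,r+1}(\delta|d)\cong H^{q,r}(d|\delta)\,,
\end{equation*}
which is precisely the claimed isomorphism.

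The main obstacle I anticipate is purely bookkeeping: verifying that each of the four vanishing hypotheses is consumed by exactly the right instance of Corollary~\ref{Cor1}, and that the bidegree indices in \eqref{esd} line up so that the intermediate group $H^{q+1,r+1}(\delta|d)$ produced by one application is literally the same group fed into the other. One must be careful that the exact sequence \eqref{esd} requires $H^{q,r}(d)=0$ for the \emph{specific} $(q,r)$ at which it is instantiated, so the two applications sit at genuinely different base points $(q,r)$ and $(q+1,r)$, and the four hypotheses split as $\{H^{q,r}(d),H^{q+1,r+1}(\delta)\}$ for the first and $\{H^{q+1,r}(d),H^{q+2,r+1}(\delta)\}$ for the second. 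Once this alignment is confirmed, no further computation is needed — the corollary is a formal consequence of two isomorphisms glued along a shared relative cohomology group.
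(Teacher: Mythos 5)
Your argument is correct and is exactly the intended derivation: the paper states Corollary~\ref{Cor2} without proof, and the two applications of Corollary~\ref{Cor1} at the base points $(q,r)$ and $(q+1,r)$, glued along the shared group $H^{q+1,r+1}(\delta|d)$, consume the four vanishing hypotheses precisely as you describe. The only superfluous step is the isomorphism $H^{q+2,r+1}(\delta|d)\simeq H^{q+1,r}(d|\delta)$ quoted from the first application, which plays no role in the final chain.
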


\begin{corollary}\label{Cor3}
If
$H^{q,r}(d)=H^{q+1,r+1}(\delta)=H^{q,r-1}(d)=H^{q+1,r}(\delta)=0$,
then
$$
H^{q+1,r+1}(\delta|d) \simeq H^{q,r}(\delta|d)\,.
$$
\end{corollary}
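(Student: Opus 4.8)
The plan is to deduce Corollary \ref{Cor3} entirely by chaining together two instances of the isomorphisms already packaged in Corollary \ref{Cor1}, so that no fresh passage through the five-term exact sequence \eqref{esd} is required. Corollary \ref{Cor1} is the natural workhorse: whenever the middle group $H^{q+1,r+1}(\delta)$ of \eqref{esd} vanishes (and $H^{q,r}(d)=0$, so that the sequence exists in the first place), the sequence breaks into two short pieces and delivers the pair of isomorphisms $H^{q,r+1}(d|\delta)\simeq H^{q,r+1}(\delta|d)$ and $H^{q+1,r+1}(\delta|d)\simeq H^{q,r}(d|\delta)$. My strategy is to invoke the second of these once and the first once, at appropriately shifted bidegrees, and then to splice the resulting isomorphisms.

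First I would apply Corollary \ref{Cor1} at bidegree $(q,r)$. The two hypotheses it needs, namely $H^{q,r}(d)=0$ and $H^{q+1,r+1}(\delta)=0$, are precisely two of the four assumptions of Corollary \ref{Cor3}. This yields at once
\begin{equation*}
H^{q+1,r+1}(\delta|d)\simeq H^{q,r}(d|\delta)\,,
\end{equation*}
converting the ``$\delta|d$'' group on the left of the target statement into a ``$d|\delta$'' group at the lower bidegree $(q,r)$.

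Next I would apply the \emph{first} isomorphism of Corollary \ref{Cor1}, but with $r$ lowered by one, i.e.\ at bidegree $(q,r-1)$. Reading that conclusion with $r$ replaced by $r-1$ gives $H^{q,r}(d|\delta)\simeq H^{q,r}(\delta|d)$, and the hypotheses it then demands are $H^{q,r-1}(d)=0$ and $H^{q+1,r}(\delta)=0$, which are exactly the remaining two assumptions of Corollary \ref{Cor3}. Composing this with the isomorphism from the previous step produces the chain
\begin{equation*}
H^{q+1,r+1}(\delta|d)\simeq H^{q,r}(d|\delta)\simeq H^{q,r}(\delta|d)\,,
\end{equation*}
which is the assertion to be proved.

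I do not expect a genuine obstacle here, since the argument is a purely formal consequence of Corollary \ref{Cor1}; the only point demanding care is the bookkeeping of the four bidegrees. Concretely, one must check that the four vanishing hypotheses are distributed correctly between the two invocations — the pair $\{H^{q,r}(d),\,H^{q+1,r+1}(\delta)\}$ feeding the application at $(q,r)$ and the pair $\{H^{q,r-1}(d),\,H^{q+1,r}(\delta)\}$ feeding the shifted application at $(q,r-1)$ — and that the intermediate group $H^{q,r}(d|\delta)$ carries identical superscripts in both isomorphisms, so that they are genuinely composable. Once this matching of indices is confirmed, the result follows by transitivity of $\simeq$.
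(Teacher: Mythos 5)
Your proof is correct and is precisely the argument the paper intends (the corollary is stated without an explicit proof, but it follows exactly as you describe): one application of the second isomorphism of Corollary \ref{Cor1} at bidegree $(q,r)$ and one application of its first isomorphism at the shifted bidegree $(q,r-1)$, with the four vanishing hypotheses distributed between the two invocations just as you verify. The index bookkeeping checks out, and the composition through the intermediate group $H^{q,r}(d|\delta)$ yields the stated isomorphism.
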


\subsection{Spectral sequence of filtered complex}
A decreasing filtration of a complex  $(C,d)$ of vector spaces is
given by a family  of subspaces $F^pC^k$ (one for each $k$)
satisfying the following conditions:
$$
\cdots \supset F^{p-1}C^k\supset F^pC^k\supset
F^{p+1}C^k\supset\cdots\,,
$$
$$
\bigcap_p F^pC^k=0\,,\qquad C^k=\bigcup_pF^pC^k\,,\qquad d
(F^pC^k)\subset F^p C^{k+1}\,.
$$
To each filtered complex one can associate a sequence of bigraded
differential vector spaces  $\{E_r, d_r\}_{r=0}^\infty$, called
the spectral sequence, with the property that
$E^{p,q}_0=F^pC^{p+q}/F^{p+1}C^{p+q}$ and $E_{r+1}=H(d_r)$. The
differentials $d_r: E_r^{p,q}\rightarrow E^{p+r,q-r+1}_r$ are
induced in a certain natural way by $d$. In particular, $d_0$ is
given by the standard differential in the quotient complex
$F^{p}C/F^{p+1}C$. The spectral sequence machinery is designed to
compute the cohomology of filtered complexes. Namely, under
certain regularity conditions, one can prove that
$H^{k}(d)\simeq\bigoplus_{p+q=k}E^{p+q}_\infty$. (Intuitively, the
terms of the spectral sequence  $\{E_r\}$ can be regarded as
successive approximations \textit{from above} to $E_\infty$.)
Every so often, the sequence of bigraded vector spaces $\{E_r\}$
stabilizes for small values of $r$, i.e., $E_{r}\simeq
E_{r+1}\simeq\cdots \simeq E_{\infty}$, in which case one says
that the spectral sequence ``collapses'' at the $r$th step. For a
systematic exposition of spectral sequences we refer the reader to
\cite{Mac}.

\subsection{$L_\infty$-algebra}\label{S-inf}  An \textit{$L_\infty$-algebra}\footnote{We follow here the sign conventions of \cite{V}.
A physicist-oriented discussion of the $L_\infty$-algebras can be
found in \cite{LSt}. }
 is a graded vector space  $V$ endowed with
multibrackets $L_n \in Hom_1(S^nV, V)$, $n\in \mathbb{N}$,
satisfying the generalized Jacobi identities
\begin{equation}\label{GJI}
\sum_{k+l=n}\sum_{(k,l)-\mathrm{shuffle}} (-1)^\epsilon L_{l+1}
(L_k(a_{\sigma
(1)},...,a_{\sigma(k)}),a_{\sigma(k+1)},...,a_{\sigma(k+l)})=0\,,\qquad\forall
n\in \mathbb{N}\,,
\end{equation}
where a $(k,l)$-shuffle is a permutation of indices $1,2,...,k+l$
such that $\sigma(1)<\cdots < \sigma(k)$ and
$\sigma(k+1)<\cdots<\sigma(k+l)$, while $(-1)^\epsilon $ is the
natural sign prescribed by the sign rule for permutation of
homogeneous elements $a_1,...,a_n\in V$.

By definition, $L_0$ is just a distinguished element of $V$.  An
$L_\infty$-algebra with $L_0=0$ is called \textit{flat}. In the
flat case, the generalized Jacobi identities for $n=1,2,3$ can be
written as
\begin{equation}\label{wantbr}
\begin{array}{c}
d^2a=0\,,\\[4mm]
    d(a,b)+(da,b)+(-1)^{\varepsilon(a)\varepsilon(b)}(db,a)=0\,,\\[4mm]
    ((a,b),c)+(-1)^{\varepsilon(b)\varepsilon(c)}((a,c),b)+(-1)^{\varepsilon(a)(\varepsilon(c)+\varepsilon(b))}((b,c),a)\\[4mm]
    +dL_3(a,b,c)+L_3(da,b,c)+(-1)^{\varepsilon(a)\varepsilon(b)}L_{3}(db,a,c)+(-1)^{(\varepsilon(a)+\varepsilon(b))\varepsilon(c)}L_3(dc,a,b)=0\,,
    \end{array}
\end{equation}
where we set ${d}a=L_1(a)$ and $(a,b)=L_2(a,b)$. As is seen the
unary bracket $L_1$ defines a  coboundary operator on $V$, which
is also a derivation of the binary bracket. The binary bracket, in
its turn, satisfies the Jacobi identity with the homotopy
correction governed by $L_3$. The higher Jacobi identities impose
a coherent set of restrictions on $L_3$ and higher homotopies.
Notice that $L_2$ induces the Lie algebra structure on
$L_1$-cohomology. The usual Lie algebras can be viewed as
$L_\infty$-algebras with $L_k=0$, $\forall k\neq 2$, and $L_2$
being given by the Lie bracket.

\section{General theorems on local BRST cohomology}

In Section \ref{1}, we have introduced, besides the BRST
differential $s$, two more differentials: the Koszul-Tate
differential $\delta$ and the classical BRST differential $s_0$.
Both of the differentials commute with the exterior differential
$d$ giving the algebra $\mathcal{A}$ two different bicomplex
structures. As a result we have various ``absolute''   and
relative cohomology groups
\begin{equation}\label{HHHH}
    H^{g}_m(d)^k_n\,, \quad H^g_m(\delta)^k_n\,,\quad H^g_m(s_0)^k\,, \quad
    H^g_m(\delta|d)^k_n
    \,, \quad H^g_m(d|\delta)^k_n\,, \quad H^g_m(s_0|d)^k\,.
\end{equation}
The differentials $d$, $\delta$, and $s_0$ being homogenous, the
elements of the groups above are represented by local forms with
definite ghost number $g$, momentum degree $m$, and form degree
$k$. The $d$- and $\delta$-cohomology groups are also graded by
the pure ghost number $n$. Sometimes it will be convenient to
label the cohomology groups by the resolution degree in place of
the ghost number. To distinguish between these two gradings we
enclose the resolution degree $r$ in round brackets. Equation
(\ref{N}) allows one to switch easily from one notation to
another:
$$
H^{(r)}_m(\ldots)^k_n=H^{m-r+n}_m(\ldots)^k_n\,,\qquad
H^g_m(\ldots)^k_n=H^{(m-g+n)}_m(\ldots)^k_n\,.
$$

The aim of this section is to formulate some general theorems
about the cohomology groups (\ref{HHHH}). Most of the theorems
below are almost identical to those of the usual BV theory of
Lagrangian gauge systems \cite{BBH}. This is not surprising, as
there is no great difference between Lagrangian and non-Lagrangian
theories at the level of the Koszul-Tate differential. One may
wonder why we are interested in the (relative) cohomology of the
operators $\delta$ and $s_0$, leaving  aside the cohomology of the
``parent'' BRST differential $s$. The reason is twofold.  On the
one hand,  certain cohomology groups of (\ref{HHHH}) have useful
interpretations in classical field theory, as we will see in Sec.
\ref{interp}, and  on the other hand the physically relevant
cohomology of the operator $s$ is defined in the space of
\textit{nonlocal} functionals and appears to be crucial only upon
quantization.

In the rest of the paper we assume the source manifold $X$ to be
(diffeomorphic to) a contractible domain in $\mathbb{R}^{\Delta}$.
Under this assumption the following statement, called sometimes
the ``algebraic Poincar\'e lemma'', is true.

\begin{theorem}\label{APL}The cohomology of $d$ in $\mathcal{A}$ is given  by
$$
H(d)^k\simeq \left\{%
\begin{array}{ll}
    \mathbb{R}, & \hbox{for $k=0$;} \\
    0, & \hbox{for $0<k<\Delta$;} \\
    \mbox{the space of local functionals $\mathcal{F}$}, & \hbox{for $k=\Delta$.} \\
\end{array}%
\right.
$$
\end{theorem}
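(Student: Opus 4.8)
The plan is to split the computation according to the three ranges of form degree, reducing the genuine content of the theorem to the existence of a contracting homotopy for $d$ in the intermediate degrees. Since each local form depends on only finitely many jet coordinates, I would first make explicit that on sections the exterior differential acts through total derivatives, $d=dx^\mu D_\mu$ with
\[
D_\mu=\frac{\partial}{\partial x^\mu}+\partial_\mu\phi^A\frac{\partial}{\partial\phi^A}+\partial_\mu\partial_\nu\phi^A\frac{\partial}{\partial(\partial_\nu\phi^A)}+\cdots\,,
\]
and keep track throughout that the operators used send local forms to local forms (i.e. do not generate dependence on infinitely many jet coordinates).

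For $k=0$, the equation $df=0$ means $D_\mu f=0$ for every $\mu$. If $f$ depended genuinely on a jet coordinate of maximal order $N$, then for fixed $\mu$ the combination $D_\mu f$ would contain the algebraically independent order-$(N+1)$ coordinates $\partial_\mu\partial_{\mu_1}\cdots\partial_{\mu_N}\phi^A$ linearly, with coefficients $\partial f/\partial(\partial_{\mu_1}\cdots\partial_{\mu_N}\phi^A)$ and no room for cancellation; these coefficients must therefore vanish, contradicting the choice of $N$. Hence $f=f(x)$, and then $D_\mu f=\partial_{x^\mu}f=0$ forces $f$ to be constant, so $H(d)^0\cong\mathbb{R}$. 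For $k=\Delta$ there are no local forms of degree $\Delta+1$ on the $\Delta$-dimensional $X$, so $H(d)^\Delta=\mathcal{A}^{\bullet,\Delta}/d\,\mathcal{A}^{\bullet,\Delta-1}$, which is precisely the space $\mathcal{F}$ of local functionals as defined above, with $\int_X$ realizing the isomorphism; the boundary conditions (\ref{bcon}) are exactly what make the passage to this quotient compatible with integration for forms of positive momentum degree.

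The substance lies in the intermediate range $0<k<\Delta$, where the plan is to construct horizontal homotopy operators $h^k:\mathcal{A}^{\bullet,k}\to\mathcal{A}^{\bullet,k-1}$ satisfying
\[
d\,h^k+h^{k+1}\,d=\mathrm{id}\qquad(0<k<\Delta)\,,
\]
so that $d$-closedness of a form of degree $0<k<\Delta$ yields $d$-exactness at once. I would build $h^k$ as the horizontal homotopy of the variational bicomplex (as in \cite{Olv}, \cite{BBH}), combining the radial contraction of the contractible base $X$ with a compensating action on the jet fibres that encodes the fact that the fibre coordinates represent successive derivatives of the fields; an alternative is an induction on the maximal jet order in which one inverts the $D_\mu$ modulo lower-order terms at each stage. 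Because $d$ and $h^k$ are homogeneous in ghost number, momentum degree, and pure ghost number, the gradings are respected automatically.

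The hard part is precisely that $d$ is the total differential rather than the ordinary de Rham differential of any finite jet bundle $J^pE$: a $d$-closed local form need not be de Rham closed on the (contractible) finite-dimensional $J^pE$, so the ordinary Poincaré lemma does not apply there directly, and this is exactly what makes the top-degree cohomology come out to be $\mathcal{F}$ rather than $0$. Consequently the real work is (i) writing $h^k$ explicitly in terms of the total derivatives, (ii) verifying the homotopy identity $d\,h+h\,d=\mathrm{id}$ by a direct computation controlling the commutators $[D_\mu,\cdot]$, and (iii) confirming that $h^k$ preserves locality. Once these are established, middle-degree vanishing follows, and together with the two boundary computations this yields the stated cohomology of $d$.
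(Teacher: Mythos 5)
Your outline is correct and is precisely the standard proof of the algebraic Poincar\'e lemma that the paper does not reproduce but delegates to the cited references \cite{BBH}, \cite{Olv}: the jet-order argument at form degree zero, the essentially tautological identification with $\mathcal{F}$ at top degree, and the horizontal homotopy of the variational bicomplex (rather than the ordinary de Rham homotopy on a finite jet bundle, as you rightly stress) in the intermediate degrees. The paper's own proof consists only of that citation plus the remark that $H(d)^0$ is represented by constant functions on $X$, so your plan, once the homotopy operator is written out as in those references, supplies exactly the argument the paper relies on.
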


\begin{proof} The proof can be found in many places, see e.g.
\cite{BBH}, \cite{Olv}. The classes of $H(d)^0$ are represented by
constant functions on $X$.
\end{proof}

\begin{theorem}\label{3.2a}
For all $r\geq0$, $k<\Delta$, and $m+n+r+k>0$  there are
isomorphisms
$$
\begin{array}{ll}
H^{(r+1)}_m(\delta|d)^k_n\simeq
H^{(r+1)}_m(d|\delta)^k_n\,,&\qquad
H^{(r+1)}_m(\delta|d)^{k+1}_n \simeq H^{(r)}_m(d|\delta)^k_n\,,\\[3mm]
H^{(r+1)}_m(d|\delta)^{k}_n \simeq
H^{(r)}_m(d|\delta)^{k-1}_n\,,&\qquad
H^{(r+2)}_m(\delta|d)^{k+1}_n \simeq H^{(r+1)}_m(\delta|d)^k_n\,.
\end{array}
$$
\end{theorem}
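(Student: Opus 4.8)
The plan is to fix the two spectator gradings $m$ and $n$ — both preserved by $d$ and by $\delta$ — and to regard $\mathcal{A}$, with these gradings held fixed, as a bicomplex in the sense of Section 3.1 whose first degree is the form degree $k$ (raised by $d$) and whose second degree is the resolution degree $r$ (lowered by $\delta$, cf. Proposition \ref{p1}, which also guarantees the boundedness needed for the exact sequence (\ref{esd}) and its corollaries). With this dictionary, all four isomorphisms become direct instances of Corollaries \ref{Cor1}--\ref{Cor3}, and the entire content of the proof is to feed those corollaries the two vanishing inputs supplied by properness and by the algebraic Poincar\'e lemma.

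First I would record the two vanishing statements. From the properness condition (Definition \ref{properness}), together with the universal-coefficient isomorphism stripping off the form-degree factor (the argument given for $s$ in Section \ref{1} applies verbatim to $\delta$, which also annihilates the $dx^\mu$), one obtains $H^{(r)}_m(\delta)^k_n=0$ for every $r>0$ and every $k,m,n$. From Theorem \ref{APL} one obtains $H^{(r)}_m(d)^k_n=0$ for all $0<k<\Delta$, while at $k=0$ the only surviving class is the constant, which lives at $m=n=r=0$. Thus, within the range $k<\Delta$, the $d$-cohomology vanishes everywhere \emph{except} at the single corner $(k,m,n,r)=(0,0,0,0)$.

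With these inputs the four isomorphisms follow by index matching. Isomorphisms one and two are, respectively, the first and second isomorphisms of Corollary \ref{Cor1} applied at bidegree $(k,r)$: both require exactly $H^{(r)}_m(d)^k_n=0$ and $H^{(r+1)}_m(\delta)^{k+1}_n=0$, the latter holding by properness since $r+1>0$, and the former by the Poincar\'e lemma once the constant is excluded by $m+n+r+k>0$. Isomorphism four is Corollary \ref{Cor3} applied at bidegree $(k,r+1)$; its two $\delta$-hypotheses sit in resolution degrees $r+1,r+2>0$ and its two $d$-hypotheses are $H^{(r+1)}_m(d)^k_n$ and $H^{(r)}_m(d)^k_n$, the first automatically zero (positive resolution degree) and the second again excluded by $m+n+r+k>0$. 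Isomorphism three is Corollary \ref{Cor2} at bidegree $(k-1,r)$ — equivalently, the composite of isomorphisms one and two — and it is the only one requiring the extra input $H^{(r)}_m(d)^{k-1}_n=0$.

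That last input is where the only real difficulty lies, and it is entirely a form-degree-zero boundary effect. Because the algebraic Poincar\'e lemma does not kill $H(d)^0$ — the constants persist — I would verify carefully that the hypothesis $m+n+r+k>0$, together with $k<\Delta$, removes the constant class from every $d$-group entering Corollaries \ref{Cor1}--\ref{Cor3}. For isomorphisms one, two and four, and for isomorphism three whenever the lower form degree $k-1$ is still positive, this is immediate. The genuinely delicate case is isomorphism three at the bottom form degree $k=1$, where $H^{(r)}_m(d)^{0}_n$ enters and the constant can reappear precisely when $m=n=r=0$; there one must check directly that the descent is not obstructed by that class (equivalently, sharpen the degree counting so that the exact sequence (\ref{esd}) still applies at this corner). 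Apart from this endpoint bookkeeping, the proof is a mechanical matching of indices against the three corollaries.
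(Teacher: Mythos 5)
Your proposal is correct and follows essentially the same route as the paper, whose entire proof consists of fixing $m$ and $n$, viewing $\mathcal{A}^{\bullet,\bullet}_{m,n}$ as the bicomplex (\ref{dd}) graded by form degree and resolution degree, and invoking Corollaries \ref{Cor1}--\ref{Cor3}, with the hypotheses $r\geq 0$, $k<\Delta$, $m+n+r+k>0$ supplying the required acyclicity of $d$ and $\delta$. The endpoint you flag for the third isomorphism (the surviving constant class in $H^{(0)}_0(d)^0_0$ when $k=1$ and $m=n=r=0$, where Corollary \ref{Cor2} loses one of its vanishing hypotheses) is a genuine subtlety that the paper's one-line proof passes over in silence, so your treatment is, if anything, more careful than the original.
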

\begin{proof}
Fixing the numbers $m$ and $n$, consider the bicomplex (\ref{dd})
with $C^{k,m-g+n}=\mathcal{A}^{g,k}_{m,n}$. Then the isomorphisms
follow immediately from Corollaries \ref{Cor1} - \ref{Cor3} (the
conditions $r\geq0$, $k<\Delta$, and $m+n+r+k>0$  ensure acyclicity
of $d$ and $\delta$).
\end{proof}

\begin{theorem}\label{3.2}
There are isomorphisms
\begin{equation}\label{2iso}
H^{(1)}_0(d|\delta)^0_0=0\,,\qquad H^{(1)}_0(\delta|d)^{1}_0\simeq
H^{(0)}_0(d|\delta)^0_0/\mathbb{R}\,.
\end{equation}
\end{theorem}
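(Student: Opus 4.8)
The plan is to treat both relations as the ``corner case'' of the exact sequence (\ref{esd}) sitting at form degree $0$, resolution degree $0$, and $m=n=0$. This is precisely the bidegree where the hypothesis of the cited exact sequence breaks down: by the algebraic Poincar\'e lemma (Theorem \ref{APL}) one has $H(d)^0\cong\mathbb{R}$, the class being the constant function $1$, which lives exactly at resolution degree $0$ and $m=n=0$. I therefore would not apply (\ref{esd}) as a black box, but instead run the underlying descent by hand while keeping track of this single extra class; it is this constant that will produce the quotient by $\mathbb{R}$ in the second isomorphism. The two inputs used repeatedly are that $d$-closed local $0$-forms are constants (Theorem \ref{APL}) and that $H^{(r)}(\delta)=0$ for $r>0$ together with $H^{(0)}(\delta)\neq 0$ (Definition \ref{properness}), the latter entailing in particular that a nonzero constant is never $\delta$-exact.

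For the first relation, let $\alpha$ of form degree $0$ and resolution degree $1$ represent a class of $H^{(1)}_0(d|\delta)^0_0$, so that $d\alpha=\delta\beta$ for some $\beta$. Applying $\delta$ gives $\delta d\alpha=0$, and since $d$ and $\delta$ (anti)commute this reads $d(\delta\alpha)=0$. As $\delta\alpha$ is a $0$-form, Theorem \ref{APL} forces $\delta\alpha=c\in\mathbb{R}$; but $\delta\alpha$ is manifestly $\delta$-exact, and a nonzero constant represents a nonzero class in $H^{(0)}(\delta)$, so $c=0$. Thus $\alpha$ is $\delta$-closed at resolution degree $1$, and properness yields $\alpha=\delta\eta$. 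Since $dC^{-1}=0$ at form degree $0$, the $d|\delta$-coboundaries in this bidegree are exactly the $\delta$-boundaries, so $\alpha$ is trivial and $H^{(1)}_0(d|\delta)^0_0=0$.

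For the second relation I would construct the connecting map explicitly. Given a representative $\alpha$ of form degree $0$, resolution degree $0$ of a class in $H^{(0)}_0(d|\delta)^0_0$, the cocycle condition is $d\alpha=\delta\beta$; this same $\beta$ satisfies $\delta\beta=d\alpha\in dC$, hence is a $\delta$-cocycle modulo $d$ of form degree $1$ and resolution degree $1$, and I set $[\alpha]\mapsto[\beta]$. Well-definedness is routine: replacing $\alpha$ by a coboundary $\delta\mu$ changes $\beta$ by the $d$-exact term $\mp d\mu$, i.e.\ by a $\delta|d$-coboundary, while the ambiguity of $\beta$ itself lies in $\ker\delta$ at resolution degree $1$, which equals $\mathrm{im}\,\delta$ by $H^{(1)}(\delta)=0$, again a $\delta|d$-coboundary. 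Surjectivity is immediate, since any $\delta|d$-cocycle $\beta$ obeys $\delta\beta=d\alpha$ for some $\alpha$, and that $\alpha$ is a $d|\delta$-cocycle mapping to $[\beta]$.

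The crux, where the extra constant reappears, is the kernel. If $[\alpha]\mapsto 0$, then $\beta=\delta\nu+d\rho$; substituting into $d\alpha=\delta\beta$ and using $\delta^2=0$ and (anti)commutativity gives $d\alpha=\mp d(\delta\rho)$, so $\alpha\pm\delta\rho$ is a $d$-closed $0$-form, hence a constant by Theorem \ref{APL}. Therefore $[\alpha]$ lies in the line spanned by the class of $1$, and since $1$ is not $\delta$-exact this class is nonzero in $H^{(0)}_0(d|\delta)^0_0$; the kernel is thus exactly $\mathbb{R}$. Combining injectivity modulo $\mathbb{R}$ with surjectivity gives $H^{(1)}_0(\delta|d)^1_0\cong H^{(0)}_0(d|\delta)^0_0/\mathbb{R}$. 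I expect the only delicate points to be the sign bookkeeping in the relation between $d$ and $\delta$ and the careful, repeated use of the fact that nonzero constants are never $\delta$-exact, which is exactly what isolates the copy of $\mathbb{R}$ in both statements.
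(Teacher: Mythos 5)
Your proof is correct. For the first isomorphism you follow essentially the same route as the paper: derive $d(\delta a)=0$, invoke Theorem \ref{APL} to get $\delta a=c\in\mathbb{R}$, rule out $c\neq 0$ via properness, and then use acyclicity of $\delta$ in positive resolution degree. One small caveat: the fact you use twice, that a nonzero constant is never $\delta$-exact, does not follow from $H^{(0)}(\delta)\neq 0$ by purely formal (additive) reasoning — it needs the multiplicative structure. The paper makes this explicit: if $\delta a=c\neq 0$, then any resolution-degree-zero cocycle $n$ satisfies $n=\delta(na/c)$ by the Leibniz rule, forcing $H^{(0)}(\delta)=0$. You assert the entailment without this one-line argument; it is worth writing out, but it is not a gap in substance.

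For the second isomorphism you genuinely diverge from the paper. The paper interchanges the roles of $d$ and $\delta$ in the exact sequence (\ref{esd}) of \cite{DVHTV}, specializes to $k=r=1$, $m=n=0$, and reads off the short exact sequence $0\to\mathbb{R}\to H^{0,0}(d|\delta)\to H^{1,1}(\delta|d)\to 0$. You instead construct the connecting homomorphism $[\alpha]\mapsto[\beta]$ (where $d\alpha=\delta\beta$) by hand and verify well-definedness, surjectivity, and that the kernel is exactly the line spanned by $[1]$. This is precisely the map hidden inside the paper's short exact sequence, so the two arguments are mathematically the same descent; what yours buys is self-containedness (no appeal to the cited exact-sequence theorem, whose hypothesis $H^{q,r}(d)=0$ fails at this corner anyway and must be patched), at the cost of the routine diagram-chasing checks. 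Your identification of the kernel, including the verification that $[1]\neq 0$ so that the kernel is genuinely one-dimensional, is exactly what makes the quotient by $\mathbb{R}$ appear, matching the injectivity of the first arrow in the paper's sequence (\ref{H11-H00}).
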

\begin{proof}
If $[a]\in H^{(1)}_0(d|\delta)^0_0$, then $da=\delta b$ for some
$b$. Hence, $d\delta a=0$ and by Theorem \ref{APL}, $\delta a
=c\in \mathbb{R}$. If $c=0$, then $a=\delta m$ because $\delta$ is
acyclic in positive resolution degree. In that case $a$ represents
the zero class of $H^{(1)}_0(d|\delta)^0_0$. Therefore,
$H^{(1)}_0(d|\delta)^0_0\neq 0$ iff there exists a relative
$d$-cocycle $a$ with $\delta a=c\neq 0$. In that case, any
$\delta$-cocycle $n$ in resolution degree zero is trivial for we
can write it as $n=\delta(na/c)$. This contradicts to
nontriviality of the group $H^{(0)}(\delta)$.

To prove the second isomorphism in (\ref{2iso}) consider the exact
sequence
\begin{equation}\label{esdelta}
0 \rightarrow H^{k-1,r}(\delta|d)\rightarrow
H^{k-1,r}(d|\delta)\rightarrow H^{k-1,r-1}(d) \rightarrow
H^{k-1,r-1}(d|\delta) \rightarrow H^{k,r}(\delta|d)\rightarrow
0\,.
\end{equation}
It is obtained from (\ref{esd}) by interchanging the roles of $d$
and $\delta$. Setting $k=r=1$ and $m=n=0$, we get
$H^{0,1}(d|\delta)=0$ and $H^{0,0}(d)=\mathbb{R}$. Then
(\ref{esdelta}) reduces to the short exact sequence
\begin{equation}\label{H11-H00}
0\rightarrow \mathbb{R} \rightarrow H^{0,0}(d|\delta) \rightarrow
H^{1,1}(\delta|d)\rightarrow 0\,,
\end{equation}
from which the desired isomorphism follows.

\end{proof}

\begin{theorem}\label{T32} There are isomorphisms
$$
\begin{array}{lr}
H_{m}^{(r)}(\delta|d)^\Delta_n\simeq
H^{(r-1)}_m(\delta|d)^{\Delta-1}_n\simeq \ldots\simeq
H_{m}^{(1)}(\delta|d)^{\Delta-r+1}_n\simeq
H^{(0)}_m(d|\delta)^{\Delta-r}_n\,,&\quad r<\Delta\,;\\[3mm]

H_{m}^{(\Delta)}(\delta|d)^\Delta_n\simeq
H^{(\Delta-1)}_m(\delta|d)^{\Delta-1}_n\simeq \ldots\simeq
H_{m}^{(1)}(\delta|d)^{1}_n\simeq
H^{(0)}_m(d|\delta)^{0}_n\,,&\quad m+n>0\,;\\[3mm]

H_{0}^{(\Delta)}(\delta|d)^\Delta_0\simeq
H^{(\Delta-1)}_0(\delta|d)^{\Delta-1}_0\simeq \ldots\simeq
H_{0}^{(1)}(\delta|d)^1_0\simeq
H^{(0)}_0(d|\delta)^{0}_0/\mathbb{R}\,;&\\[3mm]
H_{m}^{(r)}(\delta|d)^\Delta_n\simeq
H^{(r-1)}_m(\delta|d)^{\Delta-1}_n\simeq \ldots\simeq
H_{m}^{(r-\Delta+1)}(\delta|d)^1_n\simeq
H^{(r-\Delta)}_m(\delta)^{0}_n= 0\,,&\quad r>\Delta\,.
\end{array}
$$
\end{theorem}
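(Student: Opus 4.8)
The plan is to read every chain in the statement as a ``staircase'' of isomorphisms produced by iterating a single workhorse: the fourth isomorphism of Theorem \ref{3.2a}, namely $H^{(r+2)}_m(\delta|d)^{k+1}_n\simeq H^{(r+1)}_m(\delta|d)^k_n$, which lowers the form degree and the resolution degree simultaneously by one unit. Relabelling $\rho=r+2$, $\kappa=k+1$, this reads $H^{(\rho)}_m(\delta|d)^{\kappa}_n\simeq H^{(\rho-1)}_m(\delta|d)^{\kappa-1}_n$ and applies whenever $\rho\geq 2$, $\kappa\leq\Delta$, and $m+n+\rho+\kappa>3$. First I would check that these hypotheses hold all along each staircase. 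The very first step down from form degree $\Delta$ is legitimate because the iso only requires the \emph{lower} form degree $\kappa-1=\Delta-1$ to be strictly below $\Delta$; and since form and resolution degree fall together, the positivity condition is weakest at the bottom of the staircase, where I will have to be most careful.

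Then I would treat the four cases by specifying how each staircase is launched and how it is closed at the bottom. In Cases 1 and 2 the descent runs down to $\delta|d$-cohomology in resolution degree $1$, at which point the chain is closed by the \emph{second} isomorphism of Theorem \ref{3.2a}, $H^{(1)}_m(\delta|d)^{\kappa}_n\simeq H^{(0)}_m(d|\delta)^{\kappa-1}_n$, converting the relative group $\delta|d$ into $d|\delta$; for Case 1 one takes $\kappa=\Delta-r+1$ and for Case 2 one takes $\kappa=1$. The hypothesis $m+n>0$ of Case 2 is exactly the positivity $m+n+0+0>0$ needed for this last step. In Case 3, with $m=n=0$, the same descent reaches $H^{(1)}_0(\delta|d)^1_0$, but now the second isomorphism is unavailable (its positivity condition fails at $(0,0)$); instead the bottom is supplied directly by Theorem \ref{3.2}, which produces the quotient by $\mathbb{R}$. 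In Case 4, where $r>\Delta$, the staircase descends all the way to form degree $0$, where the relative complex $C/dC$ has no $d$-coboundaries (there are no $(-1)$-forms), so $H^{(s)}_m(\delta|d)^0_n=H^{(s)}_m(\delta)^0_n$; the endpoint is then $H^{(r-\Delta)}_m(\delta)^0_n$, which vanishes by the properness condition (Definition \ref{properness}) because $r-\Delta>0$.

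I expect the main obstacle to be precisely the bottom of the staircases when $m=n=0$, where the constant in $H(d)^0\simeq\mathbb{R}$ obstructs the naive identification and breaks the positivity hypotheses of Theorem \ref{3.2a}. This is already visible in Case 3, where the correct bottom is the $/\mathbb{R}$ of Theorem \ref{3.2} rather than a clean isomorphism. The same nuisance reappears at the foot of Case 4 in the marginal subcase $m=n=0$, $r=\Delta+1$, where the final form-degree drop cannot be performed by the workhorse iso. There I would instead argue directly from the transpose exact sequence (\ref{esdelta}): with $k=1$, $r=2$, $m=n=0$ it degenerates, using $H^{0,1}(d)=0$, to an isomorphism $H^{(2)}_0(\delta|d)^1_0\simeq H^{(1)}_0(d|\delta)^0_0$, and the right-hand side vanishes by the first assertion of Theorem \ref{3.2}. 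Apart from these endpoint subtleties the argument is a routine bookkeeping of degrees, and the only other thing to verify is that the staircases have the advertised lengths, which is immediate since each step lowers both the form degree and the resolution degree by one.
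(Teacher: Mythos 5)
Your proposal is correct and follows essentially the same route as the paper: iterate the isomorphisms of Theorem \ref{3.2a} down each chain, close the $m=n=0$ chains with Theorem \ref{3.2}, identify $H(\delta|d)^0$ with $H(\delta)^0$ at form degree zero, and isolate the single leftover case $H^{(2)}_0(\delta|d)^1_0$ at the foot of the fourth line. The only (immaterial) difference is that for that leftover case you re-derive $H^{(2)}_0(\delta|d)^1_0\simeq H^{(1)}_0(d|\delta)^0_0$ directly from the exact sequence (\ref{esdelta}), whereas the paper simply invokes the second isomorphism of Theorem \ref{3.2a}, whose hypothesis $m+n+r+k=1>0$ is satisfied there; both arguments then conclude with the vanishing statement of Theorem \ref{3.2}.
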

\begin{proof}
All but two isomorphisms follow  directly from  Theorem
\ref{3.2a}. The rightmost isomorphism of the third line is
established by Theorem \ref{3.2}. The other isomorphism, which is
not covered by Theorem \ref{3.2a}, is the last isomorphism
$H^{(2)}_0(\delta|d)^1_0\simeq H^{(1)}_{0}(\delta)^0_0=0$ of the
fourth line. By Theorem \ref{3.2a}, we have
$H^{(2)}_0(\delta|d)^1_0\simeq H^{(1)}_0(d|\delta)^0_0$ and the
latter group vanishes  due to Theorem \ref{3.2}.

\end{proof}

\begin{theorem}\label{T2}
If  $n>0$,  then $H^{(r)}_m(d|\delta)^k_n=0$ for all $k<\Delta$.
\end{theorem}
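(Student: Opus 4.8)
The plan is to exploit two consequences of the hypothesis $n>0$. First, the positivity condition $m+n+r+k>0$ demanded throughout Theorem \ref{3.2a} holds automatically (as $m,r,k\geq 0$ and $n>0$), so all four families of isomorphisms there are available without restriction. Second, and more importantly, the pure ghost number sharpens the acyclicity of $d$: by Theorem \ref{APL} the only $d$-cohomology in form degree $0$ is carried by the constant functions, which have vanishing pure ghost number, so for $n>0$ one has $H^{(r)}_m(d)^k_n=0$ in \emph{every} form degree $k<\Delta$, not merely in the range $0<k<\Delta$. This extra vanishing at the form-degree-$0$ end is exactly what is needed to feed Corollaries \ref{Cor1}--\ref{Cor3} there.

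With this in hand I would first use the isomorphisms of Theorem \ref{3.2a} to slide the group $H^{(r)}_m(d|\delta)^k_n$ along the lattice of (resolution degree, form degree), trading one unit of resolution degree for one unit of form degree at each step. Whenever the chain reaches form degree $0$ at a strictly positive resolution degree $s\geq 1$, the first isomorphism of Theorem \ref{3.2a} gives $H^{(s)}_m(d|\delta)^0_n\cong H^{(s)}_m(\delta|d)^0_n$; but at form degree $0$ there is no $d$-image to quotient by, so the relative group $H(\delta|d)$ coincides with the absolute $\delta$-cohomology $H^{(s)}_m(\delta)^0_n$, which vanishes for $s>0$ by the properness condition (Definition \ref{properness}). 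This disposes of every group with $r>k$, since such a chain necessarily terminates at form degree $0$ with a positive resolution degree left over.

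The remaining cases, $r\leq k$, are the crux. Running the same isomorphisms (together with Theorem \ref{T32}) in this regime, the chain bottoms out at resolution degree $0$ in form degree $k-r$, or equivalently at the top-form-degree groups $H^{(\rho)}_m(\delta|d)^\Delta_n$ with $1\leq\rho\leq\Delta$. Here neither the acyclicity of $d$ nor that of $\delta$ bites: the exact sequence (\ref{esd}) merely permutes these boundary groups among themselves, and the Koszul--Tate cohomology $H^{(0)}_m(\delta)^0_n$ sitting underneath does not vanish in general in positive pure ghost number. The essential use of $n>0$ must therefore be made directly at this point, and this is the main obstacle. I would kill these boundary groups by a contracting homotopy on the ghost sector: since $n>0$, every representative genuinely involves the ghost variables $\vf^{i_k}$ ($k\geq 1$) and $\bar\vf{}^{i_l}$ ($l\geq 2$), which alone carry the pure ghost number, and filtering by the polynomial degree in these variables one constructs, as in the Lagrangian theory, a homotopy contracting $d$ modulo $\delta$ in form degrees below $\Delta$. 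Producing this homotopy --- equivalently, establishing the acyclicity of the ghost-extended complex modulo $d$ in positive pure ghost number --- is the principal technical point on which the whole statement rests.
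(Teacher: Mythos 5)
Your reduction is sound as far as it goes: the observation that for $n>0$ the constants drop out of $H(d)^0$, so that the descent isomorphisms of Theorem \ref{3.2a} apply all the way down to form degree $0$, correctly disposes of the groups with $r>k$ (the chain terminates in $H^{(s)}_m(\delta)^0_n$ with $s>0$, which vanishes by properness). You also correctly diagnose that for $r\le k$ the descent merely shuffles the problem over to the groups $H^{(0)}_m(d|\delta)^{k-r}_n$, equivalently $H^{(\rho)}_m(\delta|d)^\Delta_n$, and that the hypothesis $n>0$ must then be used in an essential, non-homological-bookkeeping way.

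But that is exactly where your proof stops, and the case $r\le k$ contains the entire nontrivial content of the theorem (including the bottom case $H^{(0)}_m(d|\delta)^k_n$, to which everything else reduces). You dispose of it by asserting that ``one constructs a homotopy contracting $d$ modulo $\delta$ in form degrees below $\Delta$,'' which you yourself flag as the principal technical point and do not supply. This is not a routine adaptation: such a homotopy cannot exist in the sector $n=0$ (it would kill $H^{0}_0(d|\delta)^{\Delta-1}_0$, i.e.\ the conserved currents), so its construction must exploit the foreground fields in a specific way. The paper supplies precisely this missing ingredient, and by a different device than the one you gesture at: it filters $K/\delta K$ by the \emph{number of derivatives} of the foreground fields (those of positive pure ghost number) --- not by polynomial degree in the ghosts --- splits $d=d'+d''$ with $d'$ acting only on the foreground fields, and invokes the explicit homotopy of \cite{DVHTV} for $d'$, which is acyclic below top form degree on forms genuinely involving foreground fields and can be chosen to commute with $\delta$, hence descends to the quotient and contracts $d_0$. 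The resulting spectral sequence has $E_1$ supported on $p+q=\Delta$ and collapses, yielding the vanishing in all $k<\Delta$ at once, with no need for your preliminary descent. Until you either construct your homotopy or import the \cite{DVHTV} one with the correct filtration, the argument is incomplete.
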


\begin{proof} Following the terminology of \cite{DVHTV}, we
will refer to the fields with strictly positive pure ghost number
as \textit{foreground fields}, treating all the other fields as
background ones.  The local forms from $\mathcal{A}$ are
polynomial in the foreground fields and their derivatives.
Assuming  $n>0$, define the filtered complex $F^{p+1}K^k\subset
F^{p}K^k$, where the subspace $F^pK^k$ consists of all local
$k$-forms $a\in \mathcal{A}^{g,k}_{m,n}$ containing no more than
$q=k-p$ derivatives of the foreground fields. Clearly, the
exterior differential $d: K^k\rightarrow K^{k+1}$ preserves the
filtration. Now consider the quotient complex $C=K/\delta K$ with
differential induced by $d$. The vector spaces in question,
$H^g_m(d|\delta)^k_n$, are just the cohomology groups of the
complex $C$. The decreasing filtration on $K$ induces a decreasing
filtration on $C$. By definition, $F^pC^k$ is the space of
equivalence classes  $a+\delta K$  of $k$-forms with $a\in
F^pK^k$. Let $\{E_r,d_r\}$ be the spectral sequence associated to
the filtered complex $C$.  The space $E^{p,q}_0$ spans the
equivalence classes $a+\delta K$ represented by local
$(p+q)$-forms $a$ involving exactly $q$-derivatives of the
foreground  fields. To describe the action of $d_0$ it is
convenient to decompose the exterior differential into the sum
 of two operators, $d={d}'+{d}''$, where the operator
${d}'$ affects only the foreground fields, while ${d}''$ deals
with the background fields as well as explicit  dependence of the
space-time coordinates. Then the class $a+\delta K\in E_0$ is a
$d_0$-cocycles iff $ {d}' a =\delta b $ for some $b\in K$. But the
differential ${d}'$ is known to be acyclic in all but top
form-degrees. Furthermore, the corresponding homotopy $h$,
connecting $d'$ to the projector onto the subspace of
$\Delta$-forms, can be chosen to be commuting with $\delta$. (For
the explicit construction of such a homotopy operator see
\cite{DVHTV}.) The last fact implies that $h$ passes through the
quotient $K/\delta K$ giving rise to a homotopy for $d_0$. As a
consequence, the term $E_1^{p,q}$ appears to be supported on the
antidiagonal $p+q=\Delta$, so that the spectral sequence must of
necessity collapse after the first step. Hence
$H^{(r)}_m(d|\delta)_n^k\simeq \bigoplus_{p+q=k}
E_{\infty}^{p,q}=0$ for all $k<\Delta$.
\end{proof}

\begin{theorem}\label{T35}
If $n>0$ and $r>0$, then $H^{(r)}(\delta|d)_n=0$.
\end{theorem}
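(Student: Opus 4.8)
The plan is to recognize that Theorem \ref{T35} is the ``$\delta|d$ companion'' of Theorem \ref{T2}, and to deduce it by transporting the vanishing already established for the $d|\delta$ cohomology along the isomorphisms collected in Theorems \ref{3.2a} and \ref{T32}. Since the asserted vanishing is to hold for all momentum degrees $m$ and all form degrees $k$, I would first fix $m$ and $k$ and split the argument according to whether $k<\Delta$ or $k=\Delta$, the point being that the direct duality of Theorem \ref{3.2a} is only available strictly below the top form degree.

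For $k<\Delta$ I would invoke the first isomorphism of Theorem \ref{3.2a} after the shift $r\mapsto r-1$, which reads $H^{(r)}_m(\delta|d)^k_n\simeq H^{(r)}_m(d|\delta)^k_n$. Its hypotheses are all met in the present range: $r\geq 1$ because $r>0$, the form degree is below $\Delta$ by assumption, and the degree inequality $m+n+(r-1)+k>0$ holds automatically since $n>0$. The right-hand group vanishes by Theorem \ref{T2}, giving $H^{(r)}_m(\delta|d)^k_n=0$ in this range.

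The remaining and slightly more delicate case is the top form degree $k=\Delta$, where I would feed the problem into the descending ladders of Theorem \ref{T32}. For $0<r<\Delta$ the first ladder yields $H^{(r)}_m(\delta|d)^\Delta_n\simeq H^{(0)}_m(d|\delta)^{\Delta-r}_n$, and since $0<\Delta-r<\Delta$ this again vanishes by Theorem \ref{T2}. For $r=\Delta$ the second ladder, whose hypothesis $m+n>0$ is guaranteed by $n>0$, gives $H^{(\Delta)}_m(\delta|d)^\Delta_n\simeq H^{(0)}_m(d|\delta)^0_n$, which vanishes by Theorem \ref{T2} because the form degree $0$ is below $\Delta$. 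Finally, for $r>\Delta$ the fourth ladder terminates directly at $H^{(r-\Delta)}_m(\delta)^0_n=0$. The third ladder of Theorem \ref{T32} pertains only to $m=n=0$ and is therefore irrelevant under the standing hypothesis $n>0$.

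I expect the only genuine subtlety to be the bookkeeping at the top form degree: the term-by-term duality of Theorem \ref{3.2a} breaks down at $k=\Delta$, so one must route through the explicit chains of Theorem \ref{T32} and verify that in each $r$-regime the chain lands on a group already known to be zero---either a $d|\delta$ group in form degree strictly below $\Delta$, killed by Theorem \ref{T2}, or a pure $\delta$-cohomology group in positive resolution degree, killed by properness.
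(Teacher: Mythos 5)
Your proof is correct and follows essentially the same route as the paper's: reduce the $\delta$-modulo-$d$ groups to $d$-modulo-$\delta$ groups via the descent isomorphisms of Theorem \ref{3.2a} and kill the latter with Theorem \ref{T2}. The only difference is in the bookkeeping at top form degree, where the paper applies the single isomorphism $H^{(r)}_m(\delta|d)^{\Delta}_n\simeq H^{(r-1)}_m(d|\delta)^{\Delta-1}_n$ (whose hypotheses are automatic once $n>0$) and is done in one step, whereas you iterate down the full ladders of Theorem \ref{T32}, which forces the three-way case split on $r$ versus $\Delta$ that the paper's shorter route avoids.
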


\begin{proof}
Theorem \ref{3.2a} establishes the isomorphisms
$$
H^{(r)}_m(\delta|d)^k_n\simeq\left\{%
\begin{array}{ll}
    H_m^{(r-1)}(d|\delta)^{k-1}_n, & \hbox{for $k>0$\,,} \\[3mm]
    H_m^{(r)}(d|\delta)_n^{k}, & \hbox{for $k<\Delta$}\,, \\
\end{array}%
\right.
$$
whenever the resolution degree is positive. By Theorem \ref{T2},
all these groups are trivial at positive pure ghost number.
\end{proof}

Theorems \ref{3.2} - \ref{T35} may be summarized by saying that
each of the relative cohomology groups $H(\delta|d)$ and
$H(d|\delta)$ is either zero or isomorphic to one of the following
groups:
\begin{equation}\label{Non-isomorhic-groups}
    H_0^{(0)}(d|\delta)^0_0/\mathbb{R}\,,\quad
    H_m^{(0)}(d|\delta)^{k}_0\,,\quad H^{(r)}_m(d|\delta)^\Delta_n\,,\quad H^{(0)}_m(\delta|d)^{q}_n,
\end{equation}
for $k=0,\ldots,\Delta-1$, $q=0,\ldots,\Delta$, $m\geq 0$, $n\geq
0$, and $r\geq 0$.

Let us now turn to the  cohomology of the classical BRST
differential. The general expansion  for $s_0$ with respect to the
resolution degree reads
\begin{equation}\label{longd}
    s_0=\delta+\gamma +\stackrel{_{(1)}}{s_0}+\cdots\,,\qquad \deg\,
\delta=-1\,,\quad \deg \,\gamma=0\,,\quad \deg
\stackrel{_{(r)}}{s}_0=r\,.
\end{equation}
The operator $\gamma$ is known as the \textit{longitudinal
differential} \cite{HT}. It implements the gauge invariance.
Expanding the identity $s^2_0=0$ according to the resolution
degree, we obtain the infinite sequence of relations
$$
\delta^2=0\,,\qquad [\delta,\gamma]=0\,,\qquad
\gamma^2=[\delta,\stackrel{_{(1)}}{s}_0 ]\,,\quad\ldots\,.
$$
The first three of these relations mean that the action of
$\gamma$ descends to the cohomology of $\delta$ inducing there a
coboundary operator. The corresponding complex is known as the
longitudinal complex.

\begin{theorem}\label{3.7}
The cohomology of the classical  BRST differential $s_0$ is given
by
\begin{equation}
H^{g}_{m} (s_0)\simeq
\left\{%
\begin{array}{ll}
    0, & \hbox{for $m>g$\,;} \\[3mm]
    H^g_m(\gamma, H^{(0)}(\delta)_{g-m}), & \hbox{for $m\leq g$\,.} \\
\end{array}%
\right.
\end{equation} The group $H^g_m(\gamma, H^{(0)}(\delta)_{g-m})$ describes
the cohomology of $\gamma$ in the cohomology of $\delta$.
\end{theorem}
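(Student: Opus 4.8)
The plan is to compute $H^g_m(s_0)$ by the spectral sequence of a suitable filtration, in the same spirit as the proof of Theorem \ref{T2}. First I would record two preliminary facts. Since $\Omega_1$ carries momentum degree one, the classical differential $s_0=\{\Omega_1,\,\cdot\,\}$ preserves the momentum degree, so that it makes sense to fix $m$ and regard $H^g_m(s_0)$ as a well-defined object. Next, expanding the identity $s_0^2=0$ in the resolution degree, using the expansion (\ref{longd}), yields $\delta^2=0$, $[\delta,\gamma]=0$ and $\gamma^2=[\delta,\stackrel{_{(1)}}{s}_0]$, which already shows that $\gamma$ descends to a coboundary operator on the $\delta$-cohomology. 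The entire argument will amount to organising these data into a collapsing spectral sequence.

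The crucial choice will be to filter by the pure ghost number. Using the identity (\ref{N}) in the form $n=g+r-m$, I would check how each homogeneous component of $s_0$ shifts $n$ at fixed $m$: the Koszul--Tate piece $\delta$ (resolution degree $-1$, ghost number $+1$) leaves $n$ unchanged, the longitudinal piece $\gamma$ raises $n$ by one, and every higher term $\stackrel{_{(i)}}{s}_0$ with $i\geq1$ raises $n$ by $i+1\geq2$. Consequently the decreasing filtration $F^p=\{\,a\in\mathcal{A}\;:\;\pgh\,a\geq p\,\}$ is preserved by $s_0$, and the induced $E_0$-differential is precisely $\delta$. I would then read off $E_1=H(\delta)$, with the page differential $d_1$ induced by $\gamma$; the relation $\gamma^2=[\delta,\stackrel{_{(1)}}{s}_0]$ guarantees that $d_1$ is a genuine differential on $E_1$, so that $E_2=H(\gamma,H(\delta))$.

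Now properness (Definition \ref{properness}) enters. Since $H^{(r)}(\delta)=0$ for all $r>0$, the page $E_1$ is supported entirely on resolution degree $r=0$, where $H(\delta)=H^{(0)}(\delta)$. In the bigrading the total degree is $g=p+q$ with $p=n$, so that $q=g-n=m-r$; thus $E_1$, and hence $E_2$, are concentrated on the single line $q=m$. Because $d_r\colon E_r^{p,q}\to E_r^{p+r,q-r+1}$ lowers $q$ by $r-1$, every differential with $r\geq2$ must land off this line and therefore vanishes, so the sequence collapses at $E_2$. At fixed total degree $g$ the line $q=m$ meets the lattice in the single bidegree $(p,q)=(g-m,m)$, so there are no extension problems and $H^g_m(s_0)\cong E_2^{g-m,m}$. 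On $r=0$ one has $n=g-m$, whence $E_2^{g-m,m}=H^g_m(\gamma,H^{(0)}(\delta)_{g-m})$, which is the second case. For $m>g$ the required bidegree would force $n=g-m<0$, which is impossible, so $E_1$ already vanishes there and $H^g_m(s_0)=0$, giving the first case.

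The one genuinely delicate point, which I expect to be the main obstacle, is the convergence of the spectral sequence, since at fixed $(g,m)$ the pure ghost number need not be bounded; for instance products of the pairs $\vf_{i_l}\bar\vf{}^{i_l}$ raise the resolution degree without changing $g$ or $m$. The filtration is nevertheless exhaustive and Hausdorff, with $\bigcap_pF^p=0$ because every local form is polynomial and hence of finite pure ghost number, so I would secure convergence by the standard completeness criterion for such filtrations, exactly as in the Lagrangian treatment of \cite{BBH} and \cite{HT}. Once the collapse at $E_2$ is established, this delivers $H(s_0)\cong E_\infty=E_2$ and completes the proof.
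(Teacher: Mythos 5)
Your proof is correct and follows essentially the same route as the paper: both filter by pure ghost number (equivalently, use that the expansion of $s_0$ in resolution degree is simultaneously its expansion in pure ghost number), identify $d_0=\delta$ and $E_1=H(\delta)$ concentrated on the single line $q=m$ by properness, and conclude collapse at $E_2$ with $E_2^{g-m,m}=H^g_m(\gamma,H^{(0)}(\delta)_{g-m})$. Your closing remark on convergence of the filtration is a small point of extra care that the paper passes over silently, relying as you do on the polynomiality of local forms in the foreground fields.
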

\begin{proof}  In view of (\ref{N}) the expansion (\ref{longd}) is
simultaneously the expansion according to the pure ghost number,
$$
\pgh\, \delta=0\,,\qquad \pgh\, \gamma=1\,,\qquad
\pgh\stackrel{_{(r)}}{s}_0=r+1\,.
$$
Since $\pgh\, s_0\geq 0 $ we have the  filtered complex
$F^{p+1}C^k\subset F^pC^k$, where  $F^pC^k=\bigoplus_{n=p}^\infty
\mathcal{A}^{k,\bullet}_{m, n}$. Define the spectral sequence
$\{E_r, d_r\}$ associated to this filtration. It is clear that
$E_0^{p,q}=\mathcal{A}^{p+q,\bullet}_{m,p}$ and $d_0=\delta$. The
Koszul-Tate differential being acyclic in positive resolution
degree, $H^g_m(\delta)_n=0$ for $r=n-g+m>0$. Therefore all the
non-zero terms of  $E_1^{q,p}\simeq H^{q+p}_m(\delta)_p$  lie on
the horizontal line $q=m$, see Fig. \ref{SpSeq}$a$. The spectral
sequence necessarily collapses at the second step; in so doing,
the differential $d_1: E^{p,q}_1\rightarrow E_1^{p,q+1}$ is
induced by $\gamma$:
$$
E_1^{p,q}\simeq H_m^{p+q}(\delta)_p \ni [a] \quad \mapsto\quad
d_1[a]=[\gamma a]\in H^{p+q+1}_m(\delta)_{p+1}\simeq
E_1^{q,p+1}\,.
$$
Thus, $E^{p,q}_2\simeq H^{p+q}_m(\gamma, H^{(0)}(\delta)_p)=0$
unless $q=m$ and we finally get
$$
H^g_m(s_0)\simeq \bigoplus_{p+q=g}E_2^{p,q}\simeq H^g_m(\gamma,
H^{(0)}(\delta)_{g-m})\,.
$$
Of course,  $H^{(0)}(\delta)_n\equiv 0$ for $n<0$.
\end{proof}

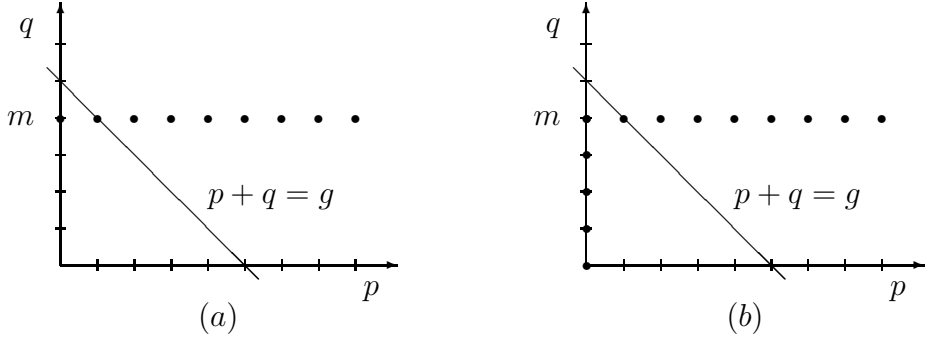
\begin{figure}
\unitlength 0.70mm 
\linethickness{0.4pt}\centering
\ifx\plotpoint\undefined\newsavebox{\plotpoint}\fi 
\begin{picture}(180.00,61.00)
\put(10.00,10.00){\vector(1,0){64}}
\put(17.00,11.00){\line(0,-1){2.00}}
\put(24.00,11.00){\line(0,-1){2.00}}
\put(31.00,11.00){\line(0,-1){2.00}}
\put(38.00,11.00){\line(0,-1){2.00}}
\put(45.00,11.00){\line(0,-1){2.00}}
\put(52.00,11.00){\line(0,-1){2.00}}
\put(59.00,11.00){\line(0,-1){2.00}}
\put(66.00,11.00){\line(0,-1){2.00}}
\put(5.00,38.00){\makebox(0,0)[rc]{$m$}}
\put(5.00,55.00){\makebox(0,0)[rc]{$q$}}
\put(10.00,10.00){\vector(0,1){50.00}}
\put(11.00,17.00){\line(-1,0){2.00}}
\put(11.00,24.00){\line(-1,0){2.00}}
\put(11.00,31.00){\line(-1,0){2.00}}
\put(11.00,38.00){\line(-1,0){2.00}}
\put(11.00,45.00){\line(-1,0){2.00}}
\put(11.00,52.00){\line(-1,0){2.00}}
\put(69.00,5.00){\makebox(0,0)[cc]{$p$}}
\put(40.00,0.00){\makebox(0,0)[cc]{$(a)$}}
\put(10.00,37.90){\circle*{1.5}} \put(17.00,37.90){\circle*{1.5}}
\put(24.00,37.90){\circle*{1.5}} \put(31.00,37.90){\circle*{1.5}}
\put(38.00,37.90){\circle*{1.5}} \put(45.00,37.90){\circle*{1.5}}
\put(52.00,37.90){\circle*{1.5}} \put(59.00,37.90){\circle*{1.5}}
\put(66.00,37.90){\circle*{1.5}}
\put(7.45,47.55){\line(1,-1){40.2}}
\put(110.00,10.00){\vector(1,0){64}}
\put(117.00,11.00){\line(0,-1){2.00}}
\put(124.00,11.00){\line(0,-1){2.00}}
\put(131.00,11.00){\line(0,-1){2.00}}
\put(138.00,11.00){\line(0,-1){2.00}}
\put(145.00,11.00){\line(0,-1){2.00}}
\put(152.00,11.00){\line(0,-1){2.00}}
\put(159.00,11.00){\line(0,-1){2.00}}
\put(166.00,11.00){\line(0,-1){2.00}}
\put(169.00,5.00){\makebox(0,0)[cc]{$p$}}
\put(140.00,0.00){\makebox(0,0)[cc]{$(b)$}}
\put(109.80,10.00){\vector(0,1){50.00}}
\put(110.80,17.00){\line(-1,0){2.00}}
\put(110.80,24.00){\line(-1,0){2.00}}
\put(110.80,31.00){\line(-1,0){2.00}}
\put(110.80,38.00){\line(-1,0){2.00}}
\put(110.80,45.00){\line(-1,0){2.00}}
\put(110.80,52.00){\line(-1,0){2.00}}
\put(105.00,38.00){\makebox(0,0)[rc]{$m$}}
\put(105.00,55.00){\makebox(0,0)[rc]{$q$}}
\put(110.00,37.90){\circle*{1.5}}
\put(117.00,37.90){\circle*{1.5}}
\put(124.00,37.90){\circle*{1.5}}
\put(131.00,37.90){\circle*{1.5}}
\put(138.00,37.90){\circle*{1.5}}
\put(145.00,37.90){\circle*{1.5}}
\put(152.00,37.90){\circle*{1.5}}
\put(159.00,37.90){\circle*{1.5}}
\put(166.00,37.90){\circle*{1.5}}
\put(110.00,37.90){\circle*{1.5}}
\put(110.00,30.90){\circle*{1.5}}
\put(110.00,23.90){\circle*{1.5}}
\put(110.00,16.90){\circle*{1.5}}
\put(110.00,09.90){\circle*{1.5}}
\put(50.00,23.00){\makebox(0,0)[cc]{$p+q=g$}}
\put(150.00,23.00){\makebox(0,0)[cc]{$p+q=g$}}
\put(107.45,47.55){\line(1,-1){40.2}}
\end{picture}
\caption[]{The first terms of spectral sequences from Theorems
\ref{3.7}, \ref{p}. }\label{SpSeq}
\end{figure}

\begin{theorem}\label{p}
There are isomorphisms of the cohomology groups
\begin{equation}\label{PhOb}
H^{g}_{m} (s_0|d)\simeq  \left\{%
\begin{array}{ll}
    H^{g}_{m} (\delta |d)_{0} , & \hbox{for $m>g$;} \\[3mm]
    H^g_m(\gamma,
    H(\delta|d)_{g-m}), & \hbox{for $m\leq g$,} \\
\end{array}%
\right.
\end{equation}
where  $H^p_m(\gamma,H(\delta|d)_{g-m})$ describes the cohomology
of $\gamma$ in the cohomology of $\delta$ modulo $d$.
\end{theorem}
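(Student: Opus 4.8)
The plan is to transcribe the spectral sequence argument of Theorem \ref{3.7} into the relative complex $\mathcal{A}/d\mathcal{A}$. Fixing the momentum degree $m$ and the form degree $k$, I would regard $s_0$ as a differential raising the ghost number on
$C^g=\bigoplus_n\mathcal{A}^{g,k}_{m,n}/d\mathcal{A}^{g,k-1}_{m,n}$; this is legitimate since the exterior differential $d$ preserves $g$, $m$, $n$ and commutes with $s_0$, so the quotient splits over $k$ and $H^g_m(s_0|d)^k$ is the cohomology of this complex. By (\ref{N}) the resolution-degree expansion (\ref{longd}) is simultaneously the pure-ghost-number expansion, with $\pgh\,\delta=0$, $\pgh\,\gamma=1$, and $\pgh\stackrel{_{(r)}}{s}_0=r+1$. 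Since $\pgh\, s_0\geq0$ and $d$ is $\pgh$-homogeneous, the pure ghost number gives a filtration $F^pC^g=\bigoplus_{n\geq p}\mathcal{A}^{g,k}_{m,n}/d\mathcal{A}^{g,k-1}_{m,n}$ of the quotient that is preserved by $s_0$, exhaustive, and bounded below by $p=0$.

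I would then pass to the associated spectral sequence $\{E_r,d_r\}$. Its zeroth differential $d_0$ is the $\pgh$-preserving part of $s_0$, namely $\delta$ acting in the quotient by $d$, so that $E_1^{p,q}\simeq H^{p+q}_m(\delta|d)^k_p\simeq H^{(m-q)}_m(\delta|d)^k_p$, the resolution degree being $r=m-q$. The decisive step is to pin down where $E_1$ is nonzero. Because every monomial carries nonnegative resolution degree, $E_1^{p,q}=0$ whenever $q>m$; and Theorem \ref{T35} gives $H^{(r)}(\delta|d)_n=0$ for $n>0$, $r>0$, so $E_1^{p,q}$ can survive only if $p=0$ or $q=m$. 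Thus $E_1$ is concentrated on the ``$L$-shape'' consisting of the column $p=0$ (with $0\leq m-q\leq m$) and the row $q=m$ (resolution degree zero), precisely the picture in Fig. \ref{SpSeq}$b$.

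Next I would compute $d_1$, which is induced by the longitudinal differential $\gamma$ and acts horizontally, $d_1:E_1^{p,q}\to E_1^{p+1,q}$. Along the row $q=m$ it yields $E_2^{p,m}\simeq H^{p+m}_m(\gamma,H^{(0)}(\delta|d)_p)$, the $\gamma$-cohomology in the resolution-degree-zero relative cohomology. Along the column $p=0$ with $q<m$ the target $E_1^{1,q}$ vanishes by Theorem \ref{T35} (here $\pgh=1>0$, $r=m-q>0$) and there is no incoming arrow (it would issue from $p=-1$), so these entries persist as $E_2^{0,q}\simeq H^{(m-q)}_m(\delta|d)^k_0$. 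I would then verify collapse at the second step: for $r\geq2$ the differential $d_r:E_r^{p,q}\to E_r^{p+r,q-r+1}$ lands off the $L$-shape, its target having $p+r>0$ together with resolution degree $m-(q-r+1)=(m-q)+r-1>0$ and hence vanishing by Theorem \ref{T35}; dually, any source mapping into the $L$-shape would sit at $q>m$ and vanish by resolution-degree positivity. Hence $E_2=E_\infty$.

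Finally I would read off the antidiagonal $p+q=g$. For $m>g$ the row $q=m$ contributes nothing (it would force $p=g-m<0$), leaving only $E_\infty^{0,g}\simeq H^{(m-g)}_m(\delta|d)^k_0=H^g_m(\delta|d)_0$, which is the first case. For $m\leq g$ the column $p=0$ meets the antidiagonal only at $g=m$, where it coincides with the corner of the row, and the sole surviving term is $E_\infty^{g-m,m}\simeq H^g_m(\gamma,H^{(0)}(\delta|d)_{g-m})$, the asserted $H^g_m(\gamma,H(\delta|d)_{g-m})$ (the subscript $g-m$ being the pure ghost number and the resolution degree zero). I expect the main obstacle to be exactly the bookkeeping that confines $E_1$ to the $L$-shape and the check that no higher differential couples the column to the row; both rest on combining the vanishing Theorem \ref{T35} with the positivity of the resolution degree, which is what separates this relative computation from its absolute counterpart in Theorem \ref{3.7}.
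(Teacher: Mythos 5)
Your proposal is correct and follows essentially the same route as the paper: the filtration of the quotient complex $\mathcal{A}/d\mathcal{A}$ by pure ghost number, the identification $E_1^{p,q}\simeq H^{p+q}_m(\delta|d)_p$, the confinement of $E_1$ to the row $q=m$ and the column $p=0$ via Theorem \ref{T35} and nonnegativity of the resolution degree, and collapse at $E_2$. The only difference is that you spell out the degeneration of the higher differentials explicitly where the paper appeals to ``dimensional reasons,'' which is a harmless elaboration of the same argument.
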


\begin{proof}
We can proceed analogously to the proof of Theorem \ref{3.7}. For
a fixed $m$, consider the filtered complex $F^{p+1}C^k\subset
F^{p}C^k$, where $F^pC^k=\bigoplus_{n=p}^\infty
\mathcal{A}^{k,\bullet}_{m, n}/d\mathcal{A}^{k,\bullet-1}_{m, n}$.
By definition,  the space $H^k_m(s_0|d)$ is the $k$-th cohomology
group of the complex $C$. Associated to the filtration above is
the spectral sequence $\{E_r,d_r\}$ with
$E_0^{q,p}=\mathcal{A}^{p+q,\bullet}_{m,p}/d\mathcal{A}^{p+q,\bullet-1}_{m,p}$.
The differential $d_0$ is naturally induced by $\delta$ and we can
identify $E_1^{p,q}$ with $H_m^{p+q}(\delta|d)_p$. By Theorem
\ref{T35}, $H_m^{p+q}(\delta|d)_p=0$ whenever  $p>0$ and $q<m$.
Nonnegativity of the resolution degree $r=p-(p+q)+m$ also implies
that $q\leq m$. All potentially non-zero spaces of $E_1$ are
depicted in Fig. \ref{SpSeq}$b$. As is seen, they are nested
either on the horizontal line $q=m$ or on the vertical segment
$p=0$, $0\leq q<m$. By dimensional reasons, the spectral sequence
collapses from $E_2$. Notice that the differential $d_1$ is
induced by $\gamma$ and becomes zero upon restriction to the
spaces $E_1^{p,q}$ with $q<m$. Therefore, $E_2^{p,q}\simeq
E_1^{p,q}$ for $q<m$ and we see that
$$
H^g_m(s_0|d)\simeq\bigoplus_{p+q=g}E^{p,q}_2\simeq H^g_m(\gamma,
H(\delta|d)_{g-m}) \quad \mbox{for}\quad g\geq m\
$$
and
$$
H_m^g(s_0|d)\simeq\bigoplus_{p+q=g}E^{p,q}_2\simeq\bigoplus_{p+q=g}E^{p,q}_1\simeq
H^g_m(\delta|d)_0\quad \mbox{for}\quad g<m\,.
$$
The proof is complete.
\end{proof}

The most notable among the groups (\ref{PhOb}) is
$H^0_0(s_0|d)^\Delta$. It is the group that is identified with the
space of physical observables of a gauge system \cite{KazLS}.

\section{Interpretation of the groups
$H^{g}_{g+1}(\delta|d)^\Delta_0$}\label{interp}

In this section, we consider some special groups of relative
$\delta$-cohomology with a straightforward physical
interpretation. Namely, we are interested in the groups
$H^{g}_{g+1}(\delta|d)^\Delta_0$ for small $g$'s. The elements of
these groups are represented by local functionals in pure ghost
number zero and resolution degree one. To streamline  our
notation,  we will write $\mathcal{L}_g=\mathcal{F}_{g+1,0}^{g}$.
Notice that the graded vector space
$$\mathcal{L}=\bigoplus_{g=-1}^\infty \mathcal{L}_g$$ is closed
with respect to the Poisson bracket on $\mathcal{F}$,
\begin{equation}\label{A}
    \{\mathcal{L}_n,\mathcal{L}_m\}\subset \mathcal{L}_{n+m}\,,
\end{equation}
so that we can speak of the graded Lie algebra  $\mathcal{L}$ of
depth $-1$.

To clarify the structure of this algebra it is convenient to use
the condensed notation introduced in \cite{KazLS}:
$$
\begin{array}{llll}
\phi^i=\varphi^{i_0}(x)\,,\qquad&
\bar\phi_i=\bar\varphi_{i_0}(x)\,,\qquad&
\eta_a=\varphi_{i_1}(x)\,,\qquad&
\bar\eta^a=\bar\varphi^{i_1}(x)\,,\\[3mm]
c^\alpha =\varphi^{i_1}(x)\,,\qquad& \bar
c_\alpha=\bar\varphi_{i_1}(x)\,,\qquad&
\xi_A=\varphi_{i_2}(x)\,,\qquad&\bar\xi^A=\bar\varphi^{i_2}(x)\,.
\end{array}
$$
The superindices on the left hand side comprise both the discrete
indices of the fields and the space-time coordinates $x^\mu$; in
so doing, a repeated superindex means usual summation over its
discrete part and integration over $X$ with the measure $v$. All
the partial derivatives are understood as variational ones.

With this notation the Koszul-Tate differential takes the form
\begin{equation}\label{KT}
    \delta=T_a(\phi)\frac{\partial}{\partial
    \eta_a}+\bar\eta^a\partial_i
    T_a(\phi)\frac{\partial}{\partial \bar\phi_i}+\eta_aZ^a_A(\phi)\frac{\partial}{\partial
    \xi_A}+\bar\phi_iR^i_\alpha(\phi)\frac{\partial}{\partial \bar
    c_\alpha}+\bar\eta^a\eta_bU^b_{\alpha
    a}(\phi)\frac{\partial}{\partial \bar c_\alpha}+\cdots\,.
\end{equation}
Here  the dots stand for the terms differentiating by fields of
resolution degree $>2$.

Now it is a good point to explain in which sense $\delta$ is a
Koszul-Tate differential. The ``Koszul part'' of
$\delta=\delta_K+\delta_T$ is given by the first two terms in
(\ref{KT}) and appears to be Hamiltonian:
\begin{equation}\label{K}
\delta_K=\{\stackrel{_{(0)}}{\Omega}, \,\cdot\,\}\,,
\end{equation}
with $\stackrel{_{(0)}}{\Omega}=\bar \eta^aT_a(\phi)$ being the
resolution-degree-zero part of the BRST charge $\Omega$.

 Let us
introduce the supercommutative subalgebra of local forms
$\mathcal{B}\subset \mathcal{A}$  defined by the conditions
$$
\deg\, \mathcal{B} =0\,,\qquad \pgh\, {\mathcal{B}}=0\,,
$$
and let $\mathcal{I}$ be an ideal of $\mathcal{B}$ generated by
the local functions $T_a(\phi)$ and $\bar\eta ^a\partial_i T_a$.
Then the Koszul-Tate deferential $\delta$ defines a homological
resolution of the quotient algebra $\mathcal{B}/\mathcal{I}$.
Namely, let $({\mathcal{A}}_0,\delta)$ be the subalgebra of the
differential  supercommutative algebra $(\mathcal{A}, \delta)$
constituted by the local forms of pure ghost number zero, $\pgh
\,{\mathcal{A}}_0=0$. The algebra $\mathcal{A}_0$ is
$\mathbb{N}$-graded by the resolution degree and contains
$\mathcal{B}$ as the subalgebra of degree $0$.  Since $\delta$ is
acyclic in positive resolution degree,
 we have
\begin{equation}\label{}
    \begin{array}{l}
      H_k({\mathcal{\mathcal{A}}_0},\delta)=0\,, \qquad k>0\,, \\[3mm]
      H_0({\mathcal{A}}_0,\delta)\simeq \mathcal{B}/\mathcal{I} \,.\\
    \end{array}
\end{equation}
From the physical viewpoint, the generators of the ideal
$\mathcal{I}$ can be regarded as equations of motion for the
fields $\phi^i$ and $\bar\eta{}^a$,
\begin{equation}\label{ExS}
T_a(\phi)=0\,,\qquad \bar\eta ^a\partial_i T_a(\phi)=0\,.
\end{equation}
 The first set of
equations describes the dynamics of the original gauge fields
$\phi$. The second group of equations, called adjoint, plays an
auxiliary role as the fields $\bar\eta$'s, being of ghost number
$1$, are unphysical. We will refer to  (\ref{ExS}) as the
\textit{extended system} of dynamical equations. By construction,
the group $H_0(\mathcal{A}_0,\delta)$ is naturally graded by the
ghost number and its ghost-number-zero subgroup is isomorphic to
the algebra of local forms of fields $\phi$'s modulo equivalence
relation: two forms are considered to be equivalent if they take
the same values on each solution to the equations of motion
$T_a(\phi)=0$. Denoting by $\mathcal{N}$ the space of fields
$\phi$ and by $\Sigma\subset \mathcal{N}$ the subspace of
solutions to the equations $T_a(\phi)=0$, we can say that the
Koszul-Tate differential implements the restriction of the local
forms of fields $\phi\in \mathcal{N}$ to the subspace
$\Sigma\subset \mathcal{N}$. In the physical literature the space
of solutions  $\Sigma$ to the classical equations of motion is
called  \textit{shell}.

Notice  that for the theories of type $(0,0)$ the Koszul-Tate
differential $\delta$ reduces to $\delta_K$. In this case, the
original equations of motion $T_a(\phi)=0$ are necessarily
independent and enjoy no gauge freedom. If the equations happen to
be dependent (reducible), then there is an (overcomplete, in
general) basis of Noether's identity generators $Z^a_A$ such that
\begin{equation}\label{NI}
Z_A^aT_a=0\,.
\end{equation}
On the other hand, the presence of gauge symmetries for the
equations $T_a=0$ implies the existence of an (overcomplete, in
general) basis of gauge symmetry generators
$R_\alpha=R_\alpha^i\partial_i$ together with structure functions
$U_{\alpha a}^b$ such that
\begin{equation}\label{GS}
R_\alpha ^i\partial_i T_a=U_{\alpha a}^bT_b\,.
\end{equation}
Unlike the Lagrangian case, where we can identify  $Z$'s with
$R$'s due to  Noether's second theorem, there is no duality
between the gauge symmetries and the Noether identities for
general non-Lagrangian dynamics. The duality, however, is restored
at the level of the extended system (\ref{ExS}). Each generator of
the original identities (\ref{NI}) is simultaneously a generator
of the gauge transformation in the extended space:
\begin{equation}\label{dZ}
\delta_{\varepsilon}\phi^i=0\,,\qquad \delta_\varepsilon
\bar\eta{}^a=\varepsilon^AZ^a_A(\phi)\,.
\end{equation}
Furthermore, every gauge transformation $\delta_\varepsilon
\phi^i=\varepsilon^\alpha R^i_\alpha(\phi)$ of the original
equations of motion gives rise to the Noether identity for the
extended system (\ref{ExS}). (It is obtained by contracting
(\ref{GS}) with $\bar\eta^a$.) This duality is an immediate
consequence of the fact that the extended system of equations
(\ref{ExS}) is variational. Although the corresponding ``action
functional'' $\stackrel{_{(0)}}{\Omega}=\bar\eta^aT_a$ is odd, the
reasoning of the second Noether's theorem still applies to this
situation. As a result, the generators of Noether identities for
(\ref{ExS}) coincide with the generators of gauge symmetries and
the same is true for the structure functions defining the
higher-order reducibility conditions (if any). All these structure
functions, including $Z$'s and $R$'s, are incorporated in Tate's
part $\delta_T$ of the Koszul-Tate differential (\ref{KT}).

Having explained the ``physical meaning'' of the Koszul-Tate
differential, let us come back to the graded Lie algebra
$\mathcal{L}$. The general element of the homogeneous subspace
$\mathcal{L}_g$ has the form
\begin{equation}\label{A}
 A=\bar\phi_iA^i_{a_1\cdots
a_{g}}(\phi)\bar\eta^{a_1}\cdots
\bar\eta^{a_{g}}+\eta_aA^a_{a_1\cdots
a_{g+1}}(\phi)\bar\eta^{a_1}\cdots \bar\eta^{a_{g+1}}\,.
\end{equation}
Observe that the restriction of the Koszul-Tate operator
(\ref{KT}) onto the subspace $\mathcal{L}$ is given by the Koszul
differential (\ref{K}). As a result,  the action of
$\delta|_\mathcal{L}=\delta_K$ is Hamiltonian and the kernel
$\mathcal{Z}=\ker \delta|_{\mathcal{L}}$ appears to be a
subalgebra in the graded Lie algebra $\mathcal{L}$,
 \begin{equation}
 \mathcal{Z}=\bigoplus_{g=-1}^\infty \mathcal{Z}_g\,,\qquad  \{\mathcal{Z}_n,
\mathcal{Z}_m\}\subset \mathcal{Z}_{n+m}\,.
\end{equation}
Actually, a more strong statement is true: The Lie algebra
structure on $\mathcal{Z}$ descends to the $\delta$-cohomology.
This means that all the $\delta$-coboundaries from $\mathcal{Z}$
form an ideal in the Lie algebra $\mathcal{Z}$, so that it makes
sense to speak of the  Lie algebra structure on the quotient space
$$
\mathcal{Z}/(\mathcal{Z}\cap \delta \mathcal{F})\simeq
\bigoplus_{g=-1}^\infty H^g_{g+1}(\delta|d)^{\Delta}_{0}\,.
$$
The last fact is not obvious at all as  the space
$\mathcal{Z}\cap\delta \mathcal{F}$ is not exhausted by  $\delta
\mathcal{L}$, while the action of $\delta$ on the whole of
$\mathcal{F}$ is non-Hamiltonian. A rigorous definition of the Lie
bracket on $\mathcal{Z}/(\mathcal{Z}\cap \delta \mathcal{F})$ will
be given in Sec. \ref{ms}.

The element (\ref{A}) of ${\mathcal{L}}_g$ is a $\delta$-cocycle
iff
\begin{equation}\label{ccond}
    \delta A = (\partial_i T^{}_{a_1}A_{a_2\cdots
    a_{g+1}}^i+T_aA^a_{a_1\cdots a_{g+1}})\bar\eta^{a_1}\cdots
    \bar\eta^{a_{g+1}}= 0\,.
\end{equation}
Since the ghost fields $\bar\eta$'s are all odd and independent,
the last condition implies that
\begin{equation}\label{ccond1}
\partial_i T^{}_{[a_{1}}A_{a_2\cdots
    a_{g+1}]}^i+T_aA^a_{a_1\cdots a_{g+1}}=0\,,
\end{equation}
where the square brackets mean  antisymmetrization in the usual
(i.e., non-graded) sense. The cocycle $A$ is a $\delta$-coboundary
if there is a local functional
\begin{equation}\label{}
\begin{array}{lll}
    B&=& \bar\phi_i\bar\phi_j B^{ij}_{a_1\cdots
    a_{g-1}}\bar\eta^{a_1}\cdots
    \bar\eta^{a_{g-1}}+\bar\phi_i\eta_a B^{ai}_{a_1\cdots
    a_{g}}\bar\eta^{a_1}\cdots \bar\eta^{a_{g}}+\eta_a\eta_b
    B^{ab}_{a_1\cdots a_{g+1}}\bar\eta^{a_1}\cdots \bar\eta^{a_{g+1}}\\[3mm]
    &+& \bar c_\alpha B^\alpha_{a_1\cdots a_{g}}
    \bar\eta^{a_1}\cdots \bar\eta^{a_{g}}+\xi_A B^A_{a_1\cdots
    a_{g+1}}\bar\eta^{a_1}\cdots \bar\eta^{a_{g+1}}
    \end{array}
\end{equation}
such that $A = \delta B$. Explicitly,
\begin{equation}\label{cob}
\begin{array}{l}
    A^i_{a_1\cdots a_{g}} =2\partial_jT^{}_{[a_1}B^{ij}_{a_2\cdots
    a_{g}]} + T_aB^{a i}_{a_1\cdots a_{g}}+R_\alpha^iB^\alpha_{a_1\cdots
    a_{g}}\,,\\[3mm]
A^a_{a_1\cdots a_{{g+1}}}=\partial_iT^{}_{[a_1}B^{ai}_{a_2\cdots
a_{g+1}]}-2T_bB^{ab}_{a_1\cdots a_{g+1}}-U_{\alpha [a_1}^a
B^\alpha_{a_2\cdots a_{g+1}]}+Z_A^aB^A_{a_1\cdots a_{g+1}}\,.
    \end{array}
\end{equation}

Consider now the condition (\ref{ccond}) for $g=-1,0,1,2$.

\subsection{
The space of characteristics is $H_0^{-1}(\delta|d)^\Delta_0$.}
For $g=-1$ the cocycle condition (\ref{ccond}) reduces
to\footnote{So far we have considered local functionals modulo
boundary terms, that is, integrals of total derivatives. The total
derivatives, however, cannot be ignored when discussing the
conservation laws; hence, we write them explicitly here.}
\begin{equation}\label{char1}
A^a T_a =\int_{X} d\omega
\end{equation}
for some local $(\Delta-1)$-form $\omega$. According to the
condensed notation, the l.h.s. of (\ref{char1}) is given by the
integral of a linear differential operator  acting on the
equations of motion. Whenever the result of such an action is an
exact $\Delta$-form, one refers to $A=\{A^a\}$ as the generator of
\textit{identities} for the equations of motion $T_a(\phi)=0$. The
trivial cocycles (\ref{cob}) correspond to linear combinations of
trivial and Noether's identities, namely,
$$
A^a=2T_bB^{ba}+Z_A^aB^{A}\,.
$$
Since $d\omega$ vanishes on the shell $\Sigma$, the
($\Delta-1$)-form $\omega$ gives rise to the conserved current
$j=\ast \omega$, where the Hodge operator
$\ast:\Lambda^p(X)\rightarrow \Lambda^{\Delta-p}(X)$ is defined by
an appropriate metric on $X$.

The quotient of the whole space of identities by the Noether and
trivial identities is known as the space of characteristics
$\mathrm{Char}(T)$ for the equations $T_a(\phi)=0$. This leads us
to the following identification:
\begin{equation}\label{char}
\mathrm{Char}(T)=H^{-1}_{0}(\delta|d) \,.
\end{equation}

Notice that Rel.(\ref{char1}) does not specify the current
$j=\ast\omega $ unambiguously, since one can add to $\omega$ any
closed (and hence exact) $(\Delta-2)$-form $d\alpha$ as well as
any local $(\Delta-1)$-form $\gamma$ that is proportional to the
equations of motion and their differential consequences. (The
latter redefinition can be absorbed by the l.h.s. of
(\ref{char1}).) Modding out by these ambiguities, we get a
well-defined class $[\omega]\in H_0^{0}(d|\delta)^{\Delta-1}_0 $
associated to a characteristic $[A]\in
H^{-1}_0(\delta|d)_0^\Delta$. The assignment $[A]\mapsto [\omega]$
defines the bijection between the groups
$H^{-1}_0(\delta|d)_0^\Delta$ and
$H^0_0(d|\delta)_0^{\Delta-1}/\delta_{\Delta,1}\mathbb{R}$ stated
by Theorem \ref{T32}. In physical terms, the latter group can be
identified with the space of nontrivial conservation laws
$\mathrm{CL}(T)$ for the equations of motion $T_a(\phi)=0$. Thus,
we arrive at the following isomorphism:
\begin{equation}\label{Chat=CL}
\mathrm{Char}(T)\simeq \mathrm{CL}(T)\,.
\end{equation}

\subsection{ The space of rigid symmetries is $H^{0}_{1}(\delta|d)^{\Delta}_{0}$}
For $g=0$ the cocycle condition (\ref{ccond1}) takes the  form
\begin{equation}
    A^{i}\partial_{i}T_{a}+A^{b}_{a}T_{b}=0\,.
\end{equation}
It means that the vector field $A=A^i\partial_i$ acting  on the
space $\mathcal{N}$ of fields $\phi$ defines  a symmetry of the
equation of motion. The set of all the symmetries
$\mathrm{Sym}'(T)$ form a Lie algebra with respect to the
commutator of vector fields. The trivial symmetries (i.e., vector
fields vanishing on the shell $\Sigma$) and the gauge symmetries
correspond to the coboundaries (\ref{cob}):
$$
A^i=B^\alpha R_\alpha^i+T_aB^{ai}\,,\qquad
A_a^b=\partial_iT_aB^{bi}-2T_cB_a^{cb}-U_{\alpha a}^bB^\alpha
+Z^b_AB^A_a\,.
$$
The space of rigid (or global) symmetries $\mathrm{RSym}(T)$ is
defined to be the quotient of all the symmetries by trivial and
gauge symmetries. This leads to the identification
\begin{equation}\label{sim}
    \mathrm{RSym}(T)=H^{0}_{1}(\delta|d)_0^\Delta\,.
\end{equation}
Notice that the homogeneous subspace $\mathcal{Z}_0\subset
\mathcal{Z}$ is also a  Lie algebra. If
$$
A^1=\bar\phi_iA_1^i+\eta_a A_{1b}^a\bar\eta^b\,,\qquad
A^2=\bar\phi_iA_2^i+\eta_a A_{2b}^a\bar\eta^b
$$
are two relative $\delta$-cocycles associated to symmetries
$A_1,A_2\in \mathrm{Sym}'(T)$, then the Poisson bracket
$\{A^1,A^2\}$ is again a relative $\delta$-cocycle  associated to
the commutator of the vector fields $[A_1,A_2]\in
\mathrm{Sym}'(T)$. Thus, we have  an isomorphism between the Lie
algebras $\mathcal{Z}_0$ and $\mathrm{Sym}'(T)$. As mentioned
above, this isomorphism descends to the  cohomology, inducing a
Lie algebra structure on the space of all rigid symmetries
(\ref{sim}).

\subsection{ The space of Lagrange structures is $H^{1}_{2}(\delta|d)^\Delta_0$} The elements of $H^{1}_{2}(\delta|d)^{\Delta}_0$
are represented by the local functionals
\begin{equation}\label{LS}
A=\bar\phi_iA^{i}_{a}(\phi)\bar{\eta}^{a}+\eta_cA_{ab}^{c}(\phi)\bar{\eta}^{a}\bar{\eta}^{b}\,,
\end{equation}
where the structure functions $A^i_a$ and $A_{ab}^c$ obey the
cocycle condition
\begin{equation}\label{LA}
    A^i_{a}\partial_{i}T_{b}-A^i_{b}\partial_{i}T_{c}+2A_{ab}^{c}T_{c}=0\,.
\end{equation}
The last equation  is nothing but the defining relation for a
\textit{Lagrange structure} \cite{KazLS}. The set of vector fields
$A_a=A_a^i\partial_i$ on  $\mathcal{N}$ is called the
\textit{Lagrange anchor}. The trivial cocycles (\ref{cob})
correspond to the trivial Lagrange structures with
\begin{equation}\label{TLA}
A_a^i=2\partial_jT_aB^{ji}+ T_bB^{bi}_a+R^i_\alpha
B^\alpha_a\,,\qquad A^c_{ab}=\partial_i
T_{[a}B_{b]}^{ci}-2T_dB_{ab}^{cd}-U^c_{\alpha[a}B_{b]}^\alpha+Z^c_AB^A_{ab}\,.
\end{equation}
The group $H^{1}_{2}(\delta|d)^\Delta_0$ is thus naturally
identified with the space of local Lagrange structures modulo
trivial ones,
\begin{equation}\label{LA-2}
    \mathrm{LS}(T) = H^{1}_{2}(\delta|d)^\Delta_0\,.
\end{equation}

We know that the Poisson square of the cocycle (\ref{LS})
representing a Lagrange structure $[A]$ is a relative
$\delta$-cocycle from $\mathcal{Z}_2$. This cocycle may well be
nontrivial.

\begin{definition}\label{int}
A Lagrange structure $[A]\in H^{1}_{2}(\delta|d)_{0}^\Delta$ is
said to be integrable if $$[\{A,A\}]=0\in
H^2_3(\delta|d)^\Delta_0\,.$$
\end{definition}

The reason for introducing the notion of integrability  is
twofold. For one thing, each BRST charge involves an integrable
Lagrange structure; for another, each integrable Lagrange
structure gives rise to a Lie bracket on the space of conservation
laws together with a Lie algebra homomorphism to the space of
rigid symmetries. Both of these statements will be detailed in the
next two sections. Definition \ref{int} suggests to interpret the
group $H^2_3(\delta|d)^\Delta_0$ as the \textit{space of
obstructions to integrability}.

Not so much experience has been gained yet of computing the group
$\mathrm{LS}(T)$ even for linear equations of motion. In the
recent paper \cite{BG1} it is shown that the study of local BRST
cohomology for not necessarily Lagrangian gauge theories can be
always reduced to the case with the classical BRST differential
involving only the first space-time derivatives of fields. This
first order reduction is achieved by introducing not necessarily
finite number of auxiliary fields. It is worth to mention in this
connection the work \cite{BG} devoted to the study of local BRST
cohomology in the AKSZ sigma-model \cite{AKSZ}. The point is that
the AKSZ sigma-model, being dynamically empty in the bulk, defines
classical dynamics and quantization of the boundary degrees of
freedom\footnote{It is appropriate to note that every field theory
in $d$ dimensions, be it Lagrangian or not, can be converted into
an equivalent topological Lagrangian model in $d+1$ dimensions
following the systematic procedure proposed in Ref. \cite{KazLS}.
Whenever the field equations in $d$ dimensional space (considered
as the boundary of $d+1$ dimensional bulk) have the form of free
differential algebra (FDA), the corresponding $d+1$ topological
Lagrangian theory has the form of the AKSZ sigma-model.}. In
particular, the corresponding topological action should involve
some Lagrange structure for the boundary dynamics. It was shown
\cite{BG} that whenever the AKSZ sigma-model has a \textit{finite
dimensional} target space, the group $\mathrm{LS}(T)$ appears to
be isomorphic to the space of bi-vectors on the target space. The
integrability condition (\ref{int}) amounts then to the Jacobi
identity for the corresponding bi-vector. In principle, any field
theory can be brought to the FDA form at the cost of introducing
an \textit{infinite dimensional} target space. In that case,
however, the group $\mathrm{LS}(T)$ is far from being exhausted by
the bi-vectors. What is more, the physically relevant Lagrange
structures do not reduce to the target space bi-vectors even for
the FDA form of the d'Alembert equation \cite{KLS1}.

\section{Multiplicative structures}\label{ms}

Theorem  \ref{p} establishes an isomorphism
\begin{equation}\label{pi}
\pi : H^g_m(s_0|d)\rightarrow H^g_m(\delta|d)_0\,,\qquad m>g\,,
\end{equation}
between the relative cohomology groups. A particular realization
of this isomorphism  is as follows. Take a relative $s_0$-cocycle
$a$ representing a class of $H^g_m(s_0|d)$ and expand it according
to the pure ghost number,
$$ a=a_0+a_1+\cdots\,,\qquad \pgh\, a_n=n\,.
$$ Then $a_0$ is a relative
$\delta$-cocycle representing a class of $H^g_m(\delta|d)_0$.
Conversely, any cohomology class $[a_0]\in H^g_{g+1}(\delta|d)_0$
is uniquely extended to the class $[a]\in H^g_{g+1}(s_0|d)$ by
adding terms of positive pure ghost number to the representative
cocycle $a_0$.

Below we apply the isomorphism (\ref{pi}) to define various
multiplicative structures on the group
$$H^{(1)}(\delta|d)^\Delta_0=\bigoplus_{g=-1}^\infty H^g_{g+1}(\delta|d)^\Delta_0\,,
$$
some of which have been already appeared in the previous section.

With the Hamiltonian action of the classical BRST differential
$s_0=\{\Omega_1,\,\cdot\,\}$, the groups $H^g_{g+1}(s_0|d)^\Delta$
form a graded Lie algebra with respect to the Poisson bracket: for
any $[a]\in H^g_{g+1}(\delta|d)^\Delta_0$ and $[b]\in
H^{g'}_{g'+1}(\delta|d)_0^\Delta$ we have
$$
\{[a],[b]\}=[\{a,b\}]\in H^{g+g'}_{g+g'+1}(s_0|d)^\Delta\,.
$$
The isomorphism (\ref{pi}) allows one to transfer this Lie algebra
structure to the group $H^{(1)}(\delta|d)^\Delta_0$ by setting
\begin{equation}\label{LB}
\{\alpha,\beta\}=\pi (\{\pi^{-1}(\alpha),
\pi^{-1}(\beta)\})\,,\qquad \forall \alpha,\beta \in
H^{(1)}(\delta|d)^\Delta_0\,.
\end{equation}
The validity of the Jacobi identity is obvious. Here we
deliberately denote the pulled back Lie bracket on
$H^{(1)}(\delta|d)^\Delta_0$ by braces. The reason is that the
right hand side of (\ref{LB}) coincides exactly with the
cohomology class of the Poisson bracket of relative
$\delta$-cocycles representing the classes $\alpha$ and $\beta$.
The proof of this fact is given in Appendix A. Thus, one can
multiply (representatives of)  the classes $\alpha$ and $\beta$ as
such, i.e.,  omitting the maps $\pi^{-1}$ and $\pi$. Formula
(\ref{LB}) just explains why the result of multiplication does not
depend on the choice of representative cocycles.

The Lie bracket on $H^{(1)}(\delta|d)^{\Delta}_0$ gives rise to a
rich variety of interesting multiplicative structures on the
spaces of characteristics (\ref{char}), rigid symmetries
(\ref{sim}) and Lagrange structures (\ref{LA-2}).   First of all,
the rigid symmetries, being supported in ghost number zero, form a
closed Lie algebra with respect to the Poisson bracket.  They also
act on the spaces of characteristics and Lagrange structures in
the Hamiltonian way:
$$
\begin{array}{ll}
\mathrm{RSym}(T)\times \mathrm{Char}(T)\rightarrow
\mathrm{Char}(T)\,:&\quad (X,\Psi)\mapsto \{X,\Psi\}\,,\\[3mm]
\mathrm{RSym}(T)\times \mathrm{LS}(T)\rightarrow
\mathrm{LS}(T)\,:&\quad (X,\Lambda)\mapsto \{X,\Lambda\}\,.
\end{array}
$$
So, we can regard the spaces $\mathrm{Char}(T)$ and
$\mathrm{LS}(T)$ as modules over the Lie algebra
$\mathrm{RSym}(T)$. Notice that unlike the Lagrange structures,
the characteristics form an abelian Lie algebra with respect to
the Poisson bracket.

Consider now the bilinear map
$$
\mathrm{LS}(T)\times \mathrm{Char}(T) \rightarrow
\mathrm{RSym}(T)\,: \qquad (\Lambda,\Psi)\mapsto
\{\Lambda,\Psi\}\,,
$$
which assigns to a Lagrange structure $\Lambda$ and a
characteristic $\Psi$ the rigid symmetry
$X_\Psi=\{\Lambda,\Psi\}$. The rigid symmetries belonging to the
image of this map are called \textit{characteristic symmetries}.
Fixing the Lagrange structure $\Lambda$, we obtain a linear map
$f_\Lambda=\{\Lambda,\,\cdot\,\}$ from the space of
characteristics to the space of rigid symmetries. The homomorphism
\begin{equation}\label{f}
f_\Lambda: \mathrm{Char}(T) \rightarrow \mathrm{RSym}(T)
\end{equation}
can be regarded as a systematic extension of the first Noether
theorem to non-Lagrangian equations of motion \footnote{For a
modern exposition of the Noether theorem, including historical
background and miscellaneous generalizations, we refer the reader
to the recent book \cite{K-S}.}. Indeed, on account of
(\ref{Chat=CL}), we deduce that any conservation law gives rise to
a characteristic symmetry. Moreover, for the Lagrangian systems
endowed with the \textit{canonical} Lagrange structure, the map
(\ref{f}) is injective and its image coincides with the rigid
symmetries that preserve the corresponding action functional
\cite{KLS}. In the general case, the homomorphism (\ref{f}) is
neither injective nor surjective.

The natural question arises whether it is possible to pull back
the Lie algebra structure to the space of characteristics  via the
homomorphism (\ref{f}). It turns out that such a Lie algebra
structure on $\mathrm{Char}(T)$ does exist whenever $\Lambda$ is
integrable in the sense of Definition \ref{int}. The corresponding
Lie bracket on characteristics can be defined as the derived
bracket
\begin{equation}\label{DB}
 (\Psi_1,\Psi_2)_\Lambda=\{\{\Lambda,\Psi_1\},\Psi_2\}\,, \qquad \forall
 \Psi_1,\Psi_2\in \mathrm{Char}(T)\,.
\end{equation}
Notice that
$$
\gh (\,\cdot\,,\,\cdot\,)_\Lambda=1\,,\qquad
\mathrm{Deg}(\,\cdot\,,\,\cdot\,)_{\Lambda}=0\,.
$$
The Jacobi identity for this bracket follows from two facts: (i)
the characteristics form an abelian algebra with respect to the
Poisson bracket and (ii) the Poisson square of the  Lagrange
structure is equal to zero, $\{\Lambda,\Lambda\}=0$. From the
Jacobi identity for the Poisson bracket it also follows that
$$
(\Psi_1,\Psi_2)_\Lambda=-(\Psi_2,\Psi_1)_\Lambda\,.
$$
Now one can verify that (\ref{f}) is a Lie algebra homomorphism.
Indeed, if $X_{\Psi_1}=\{\Lambda,\Psi_1\}$ and
$X_{\Psi_2}=\{\Lambda,\Psi_2\}$, then
$$
\begin{array}{lll}
\{X_{\Psi_1},
X_{\Psi_2}\}&=&\{\{\Lambda,\Psi_1\},\{\Lambda,\Psi_2\}\}=\{\Lambda,\{\{\Lambda,\Psi_1\},\Psi_2\}\}-\{\{\Lambda,\{\Lambda,\Psi_1\}\},\Psi_2\}\\[3mm]
 &=&\{\Lambda,
(\Psi_1,\Psi_2)_\Lambda\}-\frac12\{\{\{\Lambda,\Lambda\},\Psi_1\},\Psi_2\}=X_{(\Psi_1,\Psi_2)_\Lambda}\,.
\end{array}
$$

The Lie algebra structure on the space of conservation laws was
first introduced by Dickey in the context of Lagrangian field
theory without gauge symmetry \cite{D}. The extension to
Lagrangian gauge theories has been found by Barnich and Henneaux
\cite{BH}. On account of the isomorphism (\ref{Chat=CL}) one can
view the Lie bracket (\ref{DB}) as a further generalization of the
Dickey algebra to the case of not necessarily Lagrangian gauge
theories endowed with an integrable Lagrange structure.

It should be noted that the homomorphism
$$
(\,\cdot\,,\,\cdot\,)_\Lambda:\mathrm{Char}(T)\times
\mathrm{Char}(T)\rightarrow \mathrm{Char}(T)
$$
maps  characteristics to characteristic for any Lagrangian
structure $\Lambda$, be it integrable or not. Integrability is
only crucial for the bracket (\ref{DB}) to meet the Jacobi
identity. Leaving aside the Jacobi identity, one can regard
(\ref{DB}) as a special case of a more general construction that
associates a family of multi-brackets to any element $\Theta \in
H^{m-1}_m(\delta|d)^\Delta_0$. These multi-brackets are given by
$$
(\Psi_1,\ldots, \Psi_{m-k})^k_\Theta=\{\ldots \{\{\Theta,
\Psi_1\},\Psi_2\},\ldots, \Psi_{m-k}\}\,,\qquad 0\leq k< m\,.
$$
Setting $k=0,1,2$ we get the multi-linear maps
$$
\begin{array}{l}
(\cdots )_\Theta^0: \mathrm{Char}(T)^{\times m}\rightarrow \mathrm{Char}(T)\,,\\[3mm]
(\cdots)^1_\Theta: \mathrm{Char}(T)^{\times m-1}\rightarrow \mathrm{RSym}(T)\,,\\[3mm]
(\cdots)^2_\Theta: \mathrm{Char}(T)^{\times m-2}\rightarrow
\mathrm{LS}(T)\,.
\end{array}
$$
The last formulae provide a useful interpretation of the higher
cohomology groups $H^{m-1}_m(\delta|d)^\Delta_0$ as  multi-linear
operations on characteristics.

A somewhat different family  of multi-brackets, generalizing the
bracket (\ref{DB}), is generated by the BRST charge itself.
Consider the expansion (\ref{Om-exp}) of the BRST charge according
to the momentum degree. The second equation in (\ref{ClBRST}) says
that the term $\Omega_2$ is a $s_0$-cocycle. Hence, it defines a
class $[\Omega_2]\in H^{1}_2(s_0|d)^\Delta$ and, via the
isomorphism (\ref{pi}), a Lagrange structure $\pi([\Omega_2])\in
H^1_2(\delta|d)_0^\Delta$. Due to the third relation in
(\ref{ClBRST}), the Lagrange structure $\Lambda=\pi([\Omega_2])$
is integrable and defines the Lie bracket (\ref{DB}) on the space
of characteristics. One can put this Lie algebra structure on
$\mathrm{Char}(T)$ in a more wide context of $L_\infty$-algebras
(see Section \ref{S-inf}). In \cite{KazLS}, it was noticed that
any BRST charge associated to a gauge system gives rise to an
$L_{\infty}$-structure on the space $\mathcal{F}_{0,\bullet}$ of
local functionals in momentum degree zero.  The corresponding
multibrackets $L_n: \mathcal{F}_{0,\bullet}\rightarrow
\mathcal{F}_{0,\bullet}$ are defined as the higher derived
brackets \cite{V}:
\begin{equation}\label{Ln}
L_n(a_1,\ldots,a_n)=\{\ldots\{\{\Omega_n,a_1\},a_2\},\ldots,a_n
\}\,.
\end{equation}
Using the Jacobi identity for the Poisson bracket and
commutativity of the Poisson algebra $\mathcal{F}_{0,\bullet}$,
one can see that the the multibrackets (\ref{Ln}) are graded
symmetric and satisfy the generalized Jacobi identities
(\ref{GJI}) by virtue of the master equation $\{\Omega,
\Omega\}=0$. It then follows from Rels. (\ref{wantbr}) that the
binary bracket
$$
L_2(a_1,a_2)=\{\{\Omega_2,a_1\},a_1\}
$$
induces a Lie bracket in the cohomology of the coboundary operator
$L_1=\{\Omega_1,\,\cdot\,\}:\mathcal{F}_{0,\bullet}\rightarrow
\mathcal{F}_{0,\bullet}$. This yields a Lie algebra structure on
$H_0(s_0|d)^\Delta$. The space of characteristics, being
isomorphic to $H^{-1}_0(s_0|d)^\Delta$, constitutes a subalgebra
in the Lie algebra $H_0(s_0|d)^\Delta$. It is clear that
restricting the induced Lie bracket onto
$H_0^{-1}(s_0|d)^\Delta\simeq H_0^{-1}(\delta|d)^\Delta_0$, we
arrive at the bracket (\ref{DB}) with $\Lambda=\pi([\Omega_2])$.
Thus,  one can speak of a canonical Lie algebra isomorphism
(\ref{f}) associated to a given BRST charge.

\section{Existence and uniqueness of the BRST charge}

So far we have considered the BRST charge as given from the
outset, and the main focus has been placed on the study of the
corresponding BRST cohomology. We have shown that some of the
local BRST cohomology groups have a direct interpretation in terms
of the underlying classical dynamics.  In practice, the problem
set up is quite opposite: given the classical equations of motion,
it is necessary to construct a proper BRST charge. This problem is
similar to that in the BV quantization of Lagrangian gauge systems
where one needs to construct a local master action starting from a
gauge invariant action functional. The well known theorem \cite{H}
ensures the unique existence of the local master action under some
regularity conditions. These conditions are imposed to provide the
existence of the Koszul-Tate resolution for the stationary surface
defined by the equations of motion. The Koszul-Tate differential
appears thus the only input that comes from a classical theory
both in the Lagrangian and non-Lagrangian settings. The inductive
construction of the Koszul-Tate resolution for the stationary
surface of Lagrangian equations of motion has been proposed in
\cite{FHST}, \cite{FH}. It can be carried over immediately to
non-Lagrangian gauge theories if one starts with the extended
system of dynamical equations (\ref{ExS}) governed by the ``odd
action'' $\stackrel{_{(0)}}{\Omega}=\bar\eta^aT_a$. A
non-Lagrangian counterpart of Henneaux's theorem then reads:

\begin{theorem}
Up to a canonical transformation any classical BRST charge is
completely determined by the underlying Koszul-Tate differential,
and conversely each Koszul-Tate differential gives rise to a
classical BRST charge.
\end{theorem}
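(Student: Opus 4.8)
The statement is a classical BRST analogue of Henneaux's existence-and-uniqueness theorem, now formulated for the \emph{classical} BRST charge $\Omega_1$ (equivalently the classical BRST differential $s_0$) rather than the full quantum master action. The strategy is the standard homological perturbation theory (HPT) argument, organized by the resolution degree, exactly as in the Lagrangian case \cite{FHST},\cite{FH},\cite{H}, with the only structural input being the Koszul-Tate differential $\delta$ whose acyclicity in positive resolution degree is guaranteed by properness (Definition \ref{properness}). First I would recall that by Proposition \ref{p1} the classical BRST differential admits the expansion $s_0=\delta+\gamma+\stackrel{_{(1)}}{s}_0+\cdots$ of \eqref{longd}, with $\deg\delta=-1$, $\deg\gamma=0$, and $\deg\stackrel{_{(r)}}{s}_0=r$, and that the equation $s_0^2=0$ is equivalent to the tower of relations $\delta^2=0$, $[\delta,\gamma]=0$, $\gamma^2=[\delta,\stackrel{_{(1)}}{s}_0]$, and so on. The point is that once $\delta$ is fixed, the higher terms $\gamma,\stackrel{_{(r)}}{s}_0$ are determined order by order by these relations.

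\textbf{Existence.}
The plan for existence is an induction on the resolution degree. Suppose the terms of $s_0$ have been constructed up to resolution degree $r-1$ so that $s_0^2=0$ holds modulo degree $r$; the obstruction to continuing is a $\delta$-closed quantity $D_r$ of resolution degree $r-1$ built from the lower terms (the relation $\delta\,\stackrel{_{(r)}}{s}_0=-D_r$ must be solved for the next term). One checks, using the Jacobi identity for the Poisson bracket together with the already-established lower-order relations, that $\delta D_r=0$. Since $\delta$ is acyclic in positive resolution degree by properness, $D_r$ is $\delta$-exact, and the next term $\stackrel{_{(r)}}{s}_0$ solving the obstruction equation exists. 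Hamiltonicity of the whole construction — i.e. that $s_0=\{\Omega_1,\,\cdot\,\}$ for a \emph{local functional} $\Omega_1$ of ghost number $1$ — follows because each step can be carried out at the level of the generating functional using the master equation $\{\Omega_1,\Omega_1\}=0$ expanded by resolution degree, exactly the computation in \eqref{ClBRST}; locality of the homotopy for $\delta$ (the analogue of the acyclicity argument used in the proof of Theorem \ref{T2}) ensures that each $\stackrel{_{(r)}}{s}_0$ is again local. This gives the second assertion: every Koszul-Tate differential extends to a classical BRST charge.

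\textbf{Uniqueness up to canonical transformation.}
For uniqueness I would again argue by induction on resolution degree, comparing two solutions $\Omega_1$ and $\Omega_1'$ that induce the \emph{same} $\delta$. Assuming they agree up to a canonical transformation through degree $r-1$, the difference of the degree-$r$ terms is a $\delta$-cocycle in positive resolution degree, hence $\delta$-exact, say $\delta$-exact via some local generator $K_r$; one then absorbs this difference by an infinitesimal canonical transformation $\exp(\{\,\cdot\,,K_r\})$, which acts trivially on lower degrees and shifts the degree-$r$ term by exactly $\delta K_r$. Iterating in $r$ and composing the canonical transformations yields a single (formal) canonical transformation carrying $\Omega_1$ to $\Omega_1'$. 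The main obstacle in both halves is not the homological algebra, which is routine given acyclicity, but rather the verification of \emph{locality} at every inductive step: one must ensure that the $\delta$-homotopy can be chosen to act within the algebra $\mathcal{A}$ of local forms and to preserve the relevant degrees, so that the inductively constructed $\stackrel{_{(r)}}{s}_0$ and the canonical generators $K_r$ remain local functionals. This is precisely the content carried over from the Lagrangian theory \cite{H}, and it is what makes the theorem nontrivial beyond the purely algebraic HPT statement; I would therefore flag locality of the Koszul-Tate homotopy as the step deserving the most care.
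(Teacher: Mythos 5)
Your overall strategy (homological perturbation theory organized by resolution degree, with uniqueness absorbed into canonical transformations) is the same as the paper's, but there is a genuine gap at the central step of both halves of your argument. You justify the $\delta$-exactness of the obstruction $D_r$ (and, in the uniqueness part, of the difference of the two charges at order $k+1$) by appealing to properness, i.e.\ to the acyclicity $H^{(r)}(\delta)=0$ for $r>0$ on local \emph{forms}. But $D_r$ and the difference of the charges are local \emph{functionals}, i.e.\ $\Delta$-forms modulo $d$-exact terms, so what you actually need is the vanishing of the relative groups $H^{(r)}(\delta|d)^\Delta$ in positive resolution degree. These groups do \emph{not} vanish in general: at pure ghost number zero the groups $H^{g}_{g+1}(\delta|d)^\Delta_0$ are precisely the nontrivial spaces of characteristics, rigid symmetries and Lagrange structures studied in Section~\ref{interp}. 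Your proposed fix --- choosing a local homotopy for $\delta$ --- cannot close this gap, because no choice of homotopy makes $H^{(r)}(\delta|d)^\Delta_0$ vanish; the nonvanishing is a cohomological fact, not an artifact of a bad homotopy.

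What actually saves the argument in the paper is a piece of structure you have not used: the splitting $\mathcal{F}=\mathcal{F}'\oplus\mathcal{F}''$ of Eq.~(\ref{FF}) into functionals at most linear and at least quadratic in the foreground (positive pure ghost number) fields. The Koszul--Tate differential involves only background fields and hence preserves this splitting; the boundary term $\Omega_1'$ is written down explicitly and fixed by $\delta$, while all corrections $\Omega_1''$ and all obstructions $\stackrel{_{(p)}}{D}$ of Eq.~(\ref{Dp}) lie in $\mathcal{F}''$ and therefore carry strictly positive pure ghost number. It is Theorem~\ref{T35} --- $H^{(r)}(\delta|d)_n=0$ for $r>0$ \emph{and} $n>0$ --- and not properness alone, that then delivers exactness. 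The same mechanism is needed for uniqueness: the difference of two charges with the same $\delta$ lies in $\mathcal{F}''$ (since the $\mathcal{F}'$ parts are fixed by $\delta$), which is what makes it $\delta$-exact with a generator $\Phi\in\mathcal{F}''$, and which also guarantees that the exponential $e^{\{\Phi,\,\cdot\,\}}$ is well defined order by order. Without tracking the pure ghost number of the obstructions, the induction does not close.
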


\begin{proof} Let $\delta$ be the Koszul-Tate differential associated to some  classical BRST
charge. Our strategy will be as follows. First, starting from some
$\delta$ we will construct a classical BRST charge that have the
operator  $\delta$ as its Koszul-Tate differential. The
construction is a rather straightforward application of the
homological perturbation theory \cite{HT}. Then we will show that
any two classical BRST charges with the same Koszul-Tate
differential are related to each other by a chain of  canonical
transformations. Either step will crucially exploit the acyclicity
of $\delta$ in positive resolution degree and pure ghost number.

 As in the proof of Theorem \ref{T2}
we split the fields into two classes: the ``foreground'' and the
``background'' fields. The background fields are, by definition,
the fields with zero pure ghost number and the pure ghost number
of foreground fields is strictly positive. Any local functional is
given by the integral of a power series in foreground fields and
their derivatives. This allows us to split the space of local
functionals in the direct sum  of two subspaces:
\begin{equation}\label{FF}
\mathcal{F}=\mathcal{F}'\oplus \mathcal{F}''\,.
\end{equation}
The functionals from $\mathcal{F}'$ are at most linear in
foreground fields and the functionals from $\mathcal{F}''$ are at
least quadratic in foreground fields. As is shown in Appendix A,
the general expression for the Koszul-Tate differential involves
only the background fields. Therefore,
\begin{equation}\label{dF}
\delta \mathcal{F}'=\mathcal{F}'\,,\qquad \delta
\mathcal{F}''=\mathcal{F}''\,.
\end{equation}
The subspace $\mathcal{F}''$ is also closed with respect to the
Poisson bracket, i.e.,
\begin{equation}
\{\mathcal{F}'',\mathcal{F}''\}\subset \mathcal{F}''\,.
\end{equation}

Let $\delta$ be  a Koszul-Tate operator associated to some
classical BRST charge. We are looking for a classical BRST charge
in the form
\begin{equation}\label{oo}
 \Omega_1= \Omega'_1+\Omega''_1\,,
\end{equation}
where
\begin{equation}
 \mathcal{F}'\ni\Omega_1'=\int_X
v\left(\sum_{s=1}^m\vf^{i_s}\delta\bar\vf_{i_s} +
\sum_{s=1}^{n+1}\bar\vf^{i_s}\delta\vf_{i_s}\right)\qquad
\mbox{and}\qquad \Omega_1''\in \mathcal{F}''\,.
\end{equation}
It is easy to see that
\begin{equation}\label{oF}
\{\Omega'_1,\Omega_1'\}\in \mathcal{F}''\,,\qquad
\{\Omega'_1,\,\cdot\,\}=\delta + \cdots\,.
\end{equation}
Here the dots refer to the terms with positive pure ghost number.
It follows from the second relations in (\ref{dF}) and (\ref{oF})
that $\{\Omega'_1, \mathcal{F}''\}\subset \mathcal{F}''$. Notice
that the $\Omega''_1$-part of the classical BRST charge (\ref{oo})
does not contribute to the Koszul-Tate differential as the pure
ghost number of the Hamiltonian vector field
$\{\Omega_1'',\,\cdot\,\}$ is strictly positive. So, if exists,
$\Omega_1$ gives rise to the original Koszul-Tate differential
$\delta$. The existence is easily proved by applying the machinery
of homological perturbation theory. First, we expand $\Omega_1$
according to the resolution degree,
\begin{equation}\label{rexp}
\Omega_1=\sum_{p=0}^\infty \stackrel{_{(p)}}{\Omega}{\!\!}_1=
\sum_{p=0}^\infty
(\stackrel{_{(p)}}\Omega{\!\!}'_1+\stackrel{_{(p)}}\Omega{\!\!}''_1)\,.
\end{equation}
Note that $ \deg\, \Omega'_1\leq \max(n+1,m+1)$ and $\deg\,
\Omega_1''=\pgh\,\Omega_1''\geq 2$. The master equation
$\{\Omega_1,\Omega_1\}=0$ is then equivalent to the infinite
sequence of equations
\begin{equation}\label{meqs}
\delta
\stackrel{_{(p+1)}}{\Omega_1''}=\stackrel{_{(p)}}{D}(\stackrel{_{(0)}}{\Omega}{\!\!}_1,\ldots,\stackrel{_{(p)}}{\Omega}{\!\!}_1
)\,,
\end{equation}
where
\begin{equation}\label{Dp}
\mathcal{F}''\ni\;\stackrel{_{(p)}}{D}=\frac12\sum_{s=1}^q\sum_{q=0}^p
\{\stackrel{_{(p+s-q)}}{\Omega_1},\stackrel{_{(q)}}{\Omega}_1\}^{^{(s)}}\,,\qquad\deg\,
\{\,\cdot\,,\,\cdot\,\}^{^{(s)}}=-s \,.
\end{equation}
Eqs. (\ref{meqs}) are solvable if and only if $\delta
\stackrel{_{(p)}}{D}=0$. This follows from the acyclicity of
$\delta$ in positive resolution degree and pure ghost number
(Theorem \ref{T35}). To prove the $\delta$-closedness of $
\stackrel{_{(p)}}{D}$, we can proceed by induction on $p$. For
$p=0$, the functional $\stackrel{_{(0)}}{D}$ vanishes and we have
the trivial solution $\stackrel{_{(1)}}\Omega{\!\!}_1''=0$.
Suppose that we have satisfied the first $p$ equations of sequence
(\ref{meqs}) and let
$$
R_p=\sum_{s=0}^p\stackrel{_{(s)}}{\Omega}_1
$$
be the sum of already known functionals
$\stackrel{_{(s)}}{\Omega}_1$. By induction hypothesis
$$
\{R_p,R_p\}=2\stackrel{_{(p)}}{D}+\cdots\,,
$$
where the unwritten terms are of resolution degree $>p$.
Extracting the leading term of the Jacobi identity
$\{R_p,\{R_p,R_p\}\}=0$, we find that  $\delta
\stackrel{_{(p)}}{D}=0$. Thus, Eq. (\ref{meqs}) possesses a
solution and, by virtue of  (\ref{dF}), we can choose
$\stackrel{_{(p+1)}}{\Omega''_1}$ to be an element of
$\mathcal{F}''$.

From the consideration above it follows that the  first two terms
$\stackrel{_{(0)}}{\Omega}_1$ and $\stackrel{_{(1)}}{\Omega}_1$ of
the classical BRST charge (\ref{rexp}) are unambiguously
determined by the Koszul-Tate differential $\delta$. By contrast,
the higher order terms $\stackrel{_{(p)}}{\Omega}_1$, $p\geq 2$,
are not so unique as Eq. (\ref{meqs}) defines them only up to a
$\delta$-coboundary. Let us show that this ambiguity can be
entirely absorbed by an appropriate canonical transformation of
$\mathcal{F}$. Consider two different classical BRST charges
$\Omega_1$ and ${\overline{\Omega}}_1$ that share one and the same
Koszul-Tate differential $\delta$ and coincide to each other up to
the $k$-th order in resolution degree. In view of the comment
above, this means $k\geq 1$. By (\ref{meqs}) the difference
$\stackrel{_{(k+1)}}{{\overline{\Omega}}_1}
-\stackrel{_{(k+1)}}{{\Omega}_1}$ is $\delta$-closed. Since both
the resolution degree and pure ghost number of the difference are
positive, we can write
$$
\stackrel{_{(k+1)}}{{{\overline{\Omega}}}_1} =
\stackrel{_{(k+1)}}{{{\Omega}}_1}+\delta {\Phi}
$$
for some ${\Phi}\in \mathcal{F}''$ of resolution degree $k+2$. The
functional $\Phi$ generates a well-defined canonical
transformation $e^{\{\Phi,\,\cdot\,\}}: \mathcal{F}\rightarrow
\mathcal{F}$. (Since $\Phi\in \mathcal{F}''$, only finitely many
terms of the exponential series
$e^{\{H,\,\cdot\,\}}=1+\{H,\,\cdot\,\}+\cdots$ contribute to each
order in pure ghost number.) Applying this transformation to
$\overline{\Omega}_1$, we get a new solution to the master
equation
$$
\widetilde{\Omega}_1=e^{\{{\Phi},\,\cdot\,\}}\overline{{\Omega}}_1=\overline{\Omega}_1-\delta
\Phi+\cdots\,.
$$
By definition, $\widetilde{\Omega}_1$ coincides with $\Omega_1$ at
least to order $k+1$ and gives rise to the same Koszul-Tate
operator. So, the classical BRST charges $\Omega_1$ and
$\overline{\Omega}_1$ appear to be canonically equivalent to the
$(k+1)$-st order. The same arguments, being applied to the pair
$\Omega_1$, ${\widetilde{\Omega}}_1$, show that $\Omega_1$ is
canonically equivalent to $\widetilde{\Omega}_1$, and hence to
$\overline{\Omega}_1$, up to order $k+2$. Repeating these
arguments once and again, we infer the canonical equivalence of
$\Omega_1$ and $\overline{\Omega}_1$ at all orders in resolution
degree.

\end{proof}

Though useful (e.g. for defining the conservation laws and rigid
symmetries), the classical BRST charge carries no valuable
information about quantum properties of the system.  The quantum
properties are determined by the higher order terms in the
expansion of the ``total'' BRST charge (\ref{Om-exp}) with respect
to the momentum degree. Letting all the higher terms to be zero,
which is admissible, and applying  the general quantization
procedure \cite{KazLS} to the BRST charge $\Omega=\Omega_1$, one
can see that the probability amplitude on the space of fields is
given by the singular distribution supported on the solutions to
the classical equations of motion. As a result, the quantum
averages of physical observables are reduced to their classical
values with no quantum fluctuations. Thus, to get a nontrivial
quantization, one has to admit higher order terms in the BRST
charge. These can be considered as a deformation of the classical
BRST charge by higher order terms in momentum degree. In the rest
of the section, we are going to clarify the structure of such
deformations by making use of previously obtained  knowledge of
the local BRST cohomology. To this end, consider the decomposition
of the BRST charge into the sum
\begin{equation}\label{qbrst}
    \Omega=\Omega_1+\mathbf{\Omega}\,,\qquad
    \mathbf{\Omega}=\sum_{s=2}^\infty\Omega_s\,,\qquad \Deg\,
    \Omega_s =s\,.
\end{equation}
It is the functional  $\mathbf{\Omega}$ that is regarded  as a
deformation of the classical BRST charge $\Omega_1$. Given
$\Omega_1$, the master equation $\{\Omega,\Omega\}=0$ takes the
form of the Maurer-Cartan equation
\begin{equation}\label{MC}
    s_0\mathbf{\Omega}=\{\mathbf{\Omega},\mathbf{\Omega}\}
\end{equation}
with respect to the classical BRST differential
\begin{equation}
s_0=\{\Omega_1,\,\cdot\,\}=\delta+\cdots \,,\qquad s_0^2=0\,.
\end{equation}
The deformation  $\mathbf{\Omega}$ is said to be of order $p$ if
its expansion in homogeneous components  starts with a term of
momentum degree $p$,
$$
\mathbf{\Omega}=\sum_{m=p}^\infty{\Omega}_m\,.
$$
Two deformations $\mathbf{\Omega}$ and $\mathbf{\Omega}'$ are said
to be equivalent if they are related to each other by a canonical
transformation, i.e., there exists a local functional $\Phi$ with
$\mathrm{Deg}\,\Phi\geq 2$ such that
\begin{equation}\label{cantr}
e^{\{\Phi,\,\cdot\,\}}(\Omega_1+\mathbf{\Omega})=\Omega_1+\mathbf{\Omega}'\,.
\end{equation}
A \textit{trivial deformation} is a deformation that is equivalent
to zero. We say that a deformation of order $p$ is \textit{strict}
if it is not equivalent to a deformation of order $>p$. Let
$\mathrm{Def}(\delta)=\bigcup_{m\geq 2}\mathrm{Def}_m(\delta)$
denote the set of all solution to the Maurer-Cartan equation
(\ref{MC}) modulo canonical transformations; here the subset
$\mathrm{Def}_m(\delta)$ consists of the strict deformations of
order $m$. The deformations of $\mathrm{Def}_2(\delta)$ are called
\textit{regular} and the deformations of order $>2$ are called
\textit{singular}.

On substituting the expansion (\ref{qbrst}) into (\ref{MC}), we
arrive at the sequence of equations
\begin{equation}\label{def}
\begin{array}{ll}
s_0{\Omega}_m=0\,, & m=p,\ldots, 2p-2\,;\\[3mm]\displaystyle
s_0{\Omega}_m=-\frac 12
\sum_{n=p}^{m-2}\{{\Omega}_{m-n},{\Omega}_{n+1}\}\,,\qquad&
m>2p-2\,.
\end{array}
\end{equation}
The first equation of the sequence, $s_0\Omega_p=0$, means that
the leading term of the deformation is a $s_0$-cocycle. For a
strict deformation this cocycle is necessarily nontrivial for if
$\mathbf{\Omega}=s_0\Phi$, then the canonical transformation
(\ref{cantr}) gives the BRST charge $\Omega'$ with $\mathrm{Deg}\,
\mathbf{\Omega}'
>p$.   We have the map
$$
\mathrm{Def}_m(\delta)\rightarrow H^1_m(s_0|d)^\Delta
$$
associating to each deformation its leading term. In general this
map is neither injective nor surjective. The latter fact implies
that not all $s_0$-closed functionals of ghost number 1 and
momentum degree $p\geq 2$ can serve as leading terms of $p$-th
order deformations. Consider for example a regular deformation,
that is, a strict deformation of order 2. It follows from
(\ref{def}) that at the third order in momentum degree the leading
term of such a deformation, $\Omega_2$, obeys the equation
\begin{equation}\label{om3}
s_0\Omega_3=\frac12\{\Omega_2,\Omega_2\}\,.
\end{equation}
By the Jacobi identity the right hand side of this equation is
$s_0$-closed whenever $\Omega_2$ is a $s_0$-cocycle. But the
equation requires the Poisson square of $\Omega_2$ to be
$s_0$-exact. In other words, the Maurer-Cartan equation possesses
a solution at the third order in momentum degree iff
$$\{[\Omega_2],[\Omega_2]\}=[\{\Omega_2,\Omega_2\}]=0\in H^2_3(s_0|d)^\Delta\,.$$
At the next step  we get the equation
$$
D\Omega_4=\{\Omega_3,\Omega_2\}\,.
$$
Again, the right hand side of this equation is $s_0$-closed but not
generally $s_0$-exact.  The class $[\{\Omega_3,\Omega_2\}]$, where
$\Omega_3$ is defined by Eq. (\ref{om3}), is known as the Massey
cube of the class $[\Omega_2]$. It is usually denoted by
$\langle[\Omega_2],[\Omega_2],[\Omega_2]\rangle$. Vanishing of the
class $\langle[\Omega_2],[\Omega_2],[\Omega_2]\rangle$ is thus the
necessary and sufficient condition for the Maurer-Cartan equation
(\ref{MC}) to admit a solution at the fourth order in momentum
degree. It should be emphasized that the Massey cube can only be
defined for a class $[\Omega_2]$ with vanishing ``Massey square''
$\langle[\Omega_2], [\Omega_2]\rangle=\{[\Omega_2],[\Omega_2]\}$ and
Eq. (\ref{om3}) defines $\Omega_3$ up to an arbitrary $s_0$-cocycle.
So, one can regard the assignment $[\Omega_2]\mapsto
\langle[\Omega_2],[\Omega_2],[\Omega_2]\rangle$ as a partially
defined and multivalued map from $H_3^2(s_0|d)^\Delta$ to
$H_4^2(s_0|d)^\Delta$. Proceeding in the same manner one can see
that the class $[\Omega_2]\in H_2^1(s_0|d)^\Delta$ extends to an
element of $\mathrm{Def}_2(\delta)$ iff all the Massey powers of
$[\Omega_2]$ can be made zero,
\begin{equation}\label{MBr}
\langle \underbrace{[\Omega_2],[\Omega_2],\ldots,
[\Omega_2]}_{n}\rangle\ni 0\qquad \forall n=2,3,\ldots \,.
\end{equation}
For a  general definition  of the Massey products in the category
of differential graded Lie algebras we refer the reader to
\cite{R}, \cite{FL}.

A similar analysis applies to singular deformations. In
particular, the first equation in (\ref{def}) shows that the
square $\{[\Omega_p],[\Omega_p]\}\in H^2_{2p-1}(s_0|d)^\Delta$ of
the leading term of a $p$-th order deformation has to vanish. This
ensures the existence of a solution to (\ref{MC}) at order $2p-1$.
The study of higher order obstructions to solvability of the
Maurer-Cartan equation appears to be much  more involved than in
the case of regular deformations, but it can still  be formulated
in terms of Massey-like products \cite{FL}. We will not go into
details here.

Turning back to the regular deformations, we see that the existence
of a solution to the Maurer-Cartan equation (\ref{MC}) is generally
obstructed by elements of the cohomology groups
$H^2_m(s_0|d)^\Delta$, $m>2$. By Theorem \ref{p}  all these groups
are isomorphic to the corresponding groups
$H^2_m(\delta|d)^\Delta_0$. Furthermore, the  isomorphism (\ref{pi})
can be regarded as an isomorphism of graded Lie algebras  if we
endow $H^2_m(\delta|d)^\Delta_0$ with the pulled-back Lie bracket
given by the r.h.s. of (\ref{LB})\footnote{It should be noted that
the pulled-back Lie bracket does not coincide with the naive Poisson
bracket on $H^2_m(\delta|d)^\Delta_0$ unless the resolution degrees
of multiplied classes are less than or equal to 1.}. This allows us
to rewrite the condition (\ref{MBr}) in the following form:
\begin{equation}\label{mp}
H^2_{n+1}(\delta|d)^\Delta_0\supset \langle
\underbrace{\Lambda,\Lambda,\ldots, \Lambda}_{n}\rangle\ni 0\qquad
\forall n=2,3,\ldots \,,
\end{equation}
 $\Lambda=\pi ([\Omega_2])$ being the Lagrange structure underlying
the BRST charge $\Omega$. Thus, we conclude that a Lagrange
structure $\Lambda$ gives rise to a regular deformation of a
classical BRST charge iff all its Massey powers can be made zero.
In particular, vanishing of the Massey square of $\Lambda$ amounts
to its integrability by Definition (\ref{int}).

Actually, only a finite number of classes of the sequence
(\ref{mp}) may present true obstacles to the existence of
$\Omega$, as for all $n>\Delta+1$,
$H_{n+1}^2(\delta|d)_0^\Delta=0$  by Theorem \ref{T32}. For
example, if $\Delta=1$ (the case of mechanical systems) each
integrable Lagrange structure gives rise to a regular deformation,
and conversely each nontrivial deformation is necessarily regular
and is defined by some integrable Lagrange structure. In our next
paper we will show that there is a one-to-one correspondence
between the integrable Lagrange structures in one dimension and
the weak Poisson structures introduced in  \cite{LS}. This
correspondence allows one to relate the path-integral quantization
of not necessarily Lagrangian mechanical systems with their
deformation quantization. For actual field theories ($\Delta
> 0$) singular deformations may exist as well. The above analysis
shows, for instance,  that any element of
$H_p^1(\delta|d)^\Delta_0$ with $p>(\Delta+3)/{2}$ generates  a
singular deformation of order $p$. The quantization by means of
singular deformations, having no analogue in Lagrangian field
theory, looks  rather intriguing, to say the least. Its
interpretation in terms of deformation quantization remains
obscure to us, though.

We summarize the main results of this section by the following

\begin{theorem}
Given a Koszul-Tate differential $\delta$ and a Lagrange structure
$\Lambda\in H^1_2(\delta|d)^\Delta_0$ obeying the finite set of
conditions
$$
\langle \underbrace{\Lambda,\Lambda,\ldots ,
\Lambda}_{n}\rangle\ni 0\qquad \forall n=2,3,\ldots, \Delta+1\,,
$$
there exists a local BRST charge $\Omega=\sum_{k=1}^\infty
\Omega_k$ such that
$$
\{\Omega_1,\,\cdot\,\}=\delta+ (\mbox{terms of resolution degree
$\geq 0$}) \quad \mbox{and}\quad \pi([\Omega_2])=\Lambda\,.
$$
In general, the $\Omega$ is not unique even when considered up to
a canonical equivalence.
\end{theorem}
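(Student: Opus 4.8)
The plan is to construct $\Omega$ by solving the master equation $\{\Omega,\Omega\}=0$ order by order in the momentum degree, treating $\mathbf{\Omega}=\sum_{s\geq2}\Omega_s$ as a deformation of $\Omega_1$ governed by the Maurer--Cartan equation (\ref{MC}). First I would invoke the first theorem of this section, which produces from the given Koszul--Tate differential $\delta$ a classical BRST charge $\Omega_1$ whose Hamiltonian vector field has $\delta$ as its resolution-degree-$(-1)$ part; this fixes $s_0=\{\Omega_1,\,\cdot\,\}$. Next, using the isomorphism $\pi$ of Theorem \ref{p} (applicable since the relevant momentum degrees exceed the ghost numbers), I would lift the Lagrange structure $\Lambda\in H^1_2(\delta|d)^\Delta_0$ to a class in $H^1_2(s_0|d)^\Delta$ and pick a representative $\Omega_2$, an $s_0$-cocycle of momentum degree $2$ with $\pi([\Omega_2])=\Lambda$. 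This settles the first two terms and launches the recursion (\ref{def}) for the remaining $\Omega_m$, $m\geq3$.

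The core is the inductive step. Assume $\Omega_1,\ldots,\Omega_{m-1}$ have been chosen so that (\ref{MC}) holds through momentum degree $m-1$, and let $D_m$ denote the right-hand side of the degree-$m$ equation $s_0\Omega_m=D_m$. Extracting the leading term of the Jacobi identity for the partial sum, exactly as in the proof of the first theorem of this section, one verifies that $D_m$ is an $s_0$-cocycle of ghost number $2$. By Theorem \ref{p} it represents a class in $H^2_m(s_0|d)^\Delta\simeq H^2_m(\delta|d)^\Delta_0$, and the equation is solvable precisely when this class vanishes. Since the lower terms were fixed only up to $s_0$-cocycles, the genuine obstruction is not a single class but the entire Massey power $\langle\Lambda,\ldots,\Lambda\rangle$ with $m-1$ factors, which lives in $H^2_{m}(\delta|d)^\Delta_0$; solvability amounts to $0$ lying in this set, i.e. to condition (\ref{mp}).

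The finiteness of the hypotheses makes the recursion terminate favorably. Translating gradings via (\ref{N}), the group $H^2_{n+1}(\delta|d)^\Delta_0$ carries resolution degree $n-1$, so by the last line of Theorem \ref{T32} it vanishes identically once $n>\Delta+1$. Hence for all $m>\Delta+2$ the obstruction group is zero and $\Omega_m$ can always be found, while for the finitely many orders $m=3,\ldots,\Delta+2$ the condition $0\in\langle\Lambda,\ldots,\Lambda\rangle$ with $n=m-1=2,\ldots,\Delta+1$ factors is guaranteed by the assumed hypotheses. In each case, once $D_m$ is known to be $s_0$-exact, I would solve for $\Omega_m$ using the acyclicity packaged in Theorems \ref{T35} and \ref{p}, choosing the solution inside the correct space of local functionals of the prescribed degrees. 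Iterating yields a local $\Omega=\sum_{k\geq1}\Omega_k$ with $\{\Omega_1,\,\cdot\,\}=\delta+\cdots$ and $\pi([\Omega_2])=\Lambda$, as required.

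For non-uniqueness I would track the freedom in the construction: at every order $m\geq3$ the solution $\Omega_m$ is fixed only up to an $s_0$-cocycle, that is, up to a class in $H^1_m(s_0|d)^\Delta\simeq H^1_m(\delta|d)^\Delta_0$. A shift by an $s_0$-exact cocycle is undone by a canonical transformation $e^{\{\Phi,\,\cdot\,\}}$, as in the first theorem of this section, but a shift by a cohomologically nontrivial cocycle generally produces a BRST charge not canonically equivalent to the original, since a canonical transformation cannot alter the leading changed class; this is the failure of injectivity of the map $\mathrm{Def}_m(\delta)\to H^1_m(s_0|d)^\Delta$ noted earlier, so whenever these groups are nonzero the charge $\Omega$ is genuinely non-unique. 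The main obstacle I anticipate is the bookkeeping of this inductive step: showing that the accumulated freedom in $\Omega_3,\ldots,\Omega_{m-1}$ reorganizes the degree-$m$ obstruction \emph{exactly} into the Massey power $\langle\Lambda,\ldots,\Lambda\rangle$, that "$0$ lies in this set" is both necessary and sufficient for continuation, and that the representatives remain local with the prescribed ghost, momentum, and resolution degrees throughout, so that Theorems \ref{p}, \ref{T32} and \ref{T35} apply verbatim at each order.
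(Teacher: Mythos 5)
Your proposal is correct and follows essentially the same route as the paper: build $\Omega_1$ from $\delta$ via the first theorem of the section, lift $\Lambda$ through $\pi$ to fix $\Omega_2$, solve the Maurer--Cartan recursion (\ref{def}) order by order with obstructions identified as the Massey powers $\langle\Lambda,\ldots,\Lambda\rangle$ in $H^2_{n+1}(\delta|d)^\Delta_0$, and truncate the list of conditions at $n=\Delta+1$ using the vanishing from Theorem \ref{T32}. Your account of non-uniqueness via the residual freedom of a nontrivial class in $H^1_m(\delta|d)^\Delta_0$ at each order also matches the paper's argument.
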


\section{Conclusion}

In this paper, we have extended the main theorems on the local
BRST cohomology beyond the scope of Lagrangian dynamics. Let us
comment on the most curious peculiarities of the non-Lagrangian
BRST complex and discuss further perspectives.

In the first place, we see that the groups
$H_1^{0}(\delta|d)^\Delta_0$ and $H_0^{-1}(\delta|d)^\Delta_0$,
whose elements are identified with nontrivial rigid symmetries and
characteristics, are not generally isomorphic to each other unlike
it happens in Lagrangian dynamics. These groups, however, combine
within the unifying group $H^{(1)}(\delta|d)_0^\Delta=\bigoplus_g
H^g_{g+1}(\delta|d)_0^\Delta$. The group
$H^{(1)}(\delta|d)_0^\Delta$ is shown to have the structure of a
graded Lie algebra of depth -1. It then follows that the
homogeneous subgroup $H_1^{0}(\delta|d)^\Delta_0$ of ghost number
zero is a subalgebra in $H^{(1)}(\delta|d)_0^\Delta$ (isomorphic
to the Lie algebra of rigid symmetries) and the bottom subspace
$H_0^{-1}(\delta|d)^\Delta_0$ is a module over the Lie algebra
$H_1^{0}(\delta|d)^\Delta_0$ with respect to the adjoint action
(the symmetries act on the characteristics). The next homogeneous
subspace of the unifying group is $H_2^{1}(\delta|d)^\Delta_0$. It
is understood as the space of Lagrange structures. Again, by a
purely algebraic reason, the Lie bracket of a Lagrange structure
and a characteristic is a rigid symmetry. Thus, each Lagrange
structure defines a  map from the space of characteristics to the
space of rigid symmetries. In the Lagrangian case, the space
$H_2^{1}(\delta|d)^\Delta_0$ contains a distinguished element -
the canonical Lagrange structure - for which the aforementioned
map is injective and covers all the symmetries  of the
corresponding action functional. This is the content of the first
Noether's theorem. For an arbitrary Lagrange structure, the map is
neither injective nor surjective, so that the symmetries and
conservation laws cannot be so tightly related to each other
beyond the class of Lagrangian field theories.

Besides being a module over the Lie algebra of rigid symmetries,
the space of conservation laws of any Lagrangian field theory
carries its own Lie algebra structure with respect to the Dickey
bracket.  Like the Noether isomorphism, the Dickey bracket owes
its existence to the canonical Lagrange structure. When trying to
extend the Dickey bracket beyond the Lagrangian setting, one
naturally arrives at the concept of integrability of Lagrange
structure. Namely, a Lagrange structure $\Lambda$ is said to be
integrable if it satisfies the Maurer-Cartan equation $
[\Lambda,\Lambda]=0 $. Given an integrable Lagrange structure, one
can define  a non-Lagrangian counterpart of the Dickey bracket as
the derived bracket of characteristics (\ref{DB}) and the same
Lagrange structure defines the Lie algebra homomorphism (\ref{f}).
The notion of a Lagrange structure appears thus to be a principal
connecting-link for the groups of symmetries and conservation
laws. The interpretation of the higher homogeneous subgroups of
$H^{(1)}(\delta|d)_0^\Delta$ remains  unclear at present, maybe
excluding the group $H_3^{2}(\delta|d)^\Delta_0$ that can be
regarded as the obstruction space for integrability of Lagrange
structures. Further studies of the local BRST cohomology will
hopefully shed more light on the   utility of these groups.

The study of symmetries and conservation laws of  partial
differential equations is a broad area of mathematical physics with
numerous applications and well developed methodology. It seems that
similar methods apply to computation of the group
$H_2^{1}(\delta|d)^\Delta_0$ of Lagrange structures, though it can
be a difficult problem, in general.  In any case, every single
Lagrange structure gives a valuable bit of information about
classical dynamics and, what is important, defines a reasonable
quantum theory. As far as the quantization problem is concerned, one
is usually interested in Lagrange structures subject to one or
another set of physical conditions, rather than in knowing the
entire space $H^1_2(\delta|d)^\Delta_0$. For example, if some
fundamental symmetries of classical dynamics are expected to survive
at the quantum level, the same invariance conditions should  be
imposed on the Lagrange structure. (Notice that the space of
Lagrange structures $H^1_2(\delta|d)^\Delta_0$ is a module over the
Lie algebra of rigid symmetries $H^0_1(\delta|d)^\Delta_0$). These
conditions together with some other physical requirements may
strongly restrict the possible choice of a Lagrange structure or
even make it unique. An additional set of conditions comes from
requiring the existence of a local BRST charge. Unlike the usual BV
formalism for Lagrangian gauge theories, the locality of equations
of motion and a compatible Lagrange structure $\Lambda$ does not
ensure the existence of a local BRST charge. In order for such a
charge to exist, it is necessary and sufficient that all the Massey
powers $\langle\Lambda,\Lambda,\ldots,\Lambda\rangle$ of the
cohomology class $\Lambda\in H^1_2(\delta|d)^\Delta_0$ to vanish. In
particular, the vanishing of the Massey square reproduces the
integrability condition.

Finally, there can exist higher-order deformations of the
classical BRST charge that are governed by elements of the groups
$H^1_m(\delta|d)^\Delta_0$ with $m>2$. These deformations, called
singular, are unrelated to any Lagrange structure and their
relevance to the path-integral quantization of non-Lagrangian
field theories invites further studies.

\vspace{5mm} \noindent
 {\textbf{Acknowledgments}.}
 We dedicate this paper to the 70th birthday of Igor Victorovich
 Tyutin, who contributed so much to theory of gauge systems,
 and who so much helped many of his colleagues, in many ways, during many years.

The work is partially supported by the Federal Targeted Program
under the state contract P1337 and by the RFBR grant
09-02-00723-a. SLL acknowledges the support from the RFBR grant
11-01-00830. We are thankful  to Elena Mosman for her valuable
comments on the first version of the manuscript.

\appendix\label{A1}
\section{The proofs of two auxiliary statements}

Any Hamiltonian vector field $X_H=\{H,\,\cdot\,\}$ generated by a
local functional $H\in \mathcal{F}$ is decomposed into the sum
$$
X_H=V_H+U_H
$$
of two variational vector fields\footnote{All the variational
derivatives act on the left.}
\begin{equation}\label{VU}
\begin{array}{l}
\displaystyle V_H=\sum_{s} \int_X v
\left((-1)^{s(\varepsilon(H)-1)-1}\frac{\delta H}{\delta
\bar\vf{}^{i_s}}\frac{\delta}{\delta
\vf_{i_s}}+(-1)^{(s-1)\varepsilon(H)}\frac{\delta H}{\delta
\vf^{i_{s-1}}}\frac{\delta}{\delta
\bar\vf{}_{i_{s-1}}}\right)\,,\\[6mm]
\displaystyle U_H=\sum_{s}\int_X
v\left((-1)^{s\varepsilon(H)}\frac{\delta H}{\delta
\vf_{i_s}}\frac{\delta}{\delta
\bar\vf{}^{i_s}}+(-1)^{s(\varepsilon(H)-1)+\varepsilon(H)}\frac{\delta
H}{\delta\bar\vf{}_{i_{s-1}}}\frac{\delta}{\delta
\vf^{i_{s-1}}}\right)
\end{array}
\end{equation}
such that
\begin{equation}\label{UV}
\pgh \,V_H\geq 0\,,\qquad \deg \,U_H\geq 0\,.
\end{equation}

To prove Proposition \ref{p1} consider the Hamiltonian vector
field $X_\Omega=V_\Omega+U_\Omega$ generated by the BRST charge
$\Omega$. As the resolution degree of $U_\Omega$ is bounded from
below by zero, it remains to evaluate the low bound for
$V_\Omega$. Applying the identity (\ref{N}) yields
\begin{equation}\label{VV}
    \deg\, V_\Omega=\pgh\, V_\Omega+ \mathrm{Deg}\, V_\Omega-\gh\, V_\Omega\,.
\end{equation}
On the other hand,
\begin{equation}\label{VVV}
\gh\, V_\Omega=1\,,\qquad \mathrm{Deg}\,
V_\Omega=\mathrm{Deg}\,\Omega -1\geq 0\,.
\end{equation}
Combining (\ref{UV}), (\ref{VV}), and (\ref{VVV}) we get
\begin{equation}\label{}
    \deg\,V_\Omega =\pgh\, V_\Omega+\mathrm{Deg}\, \Omega-2\geq -1\,.
\end{equation}
The last inequality shows two things: (i) only the classical BRST
charge $\Omega_1$ contributes to $\delta$ and (ii) $\pgh \,\delta
=0$. From the latter property and the  definition of $V_\Omega$
it also follows that the Koszul-Tate differential $\delta$ is
completely determined in terms of fields of pure ghost number zero
(background fields).

Similar arguments allows one to prove the equality (\ref{LB}). Let
$a$ and $b$ be relative $\delta$-cocycles representing the classes
$\alpha, \beta \in H^{(1)}(\delta|d)^{\Delta}_0$. By the
definition of the isomorphism (\ref{pi}), the classes
$\pi^{-1}(\alpha)$ and $\pi^{-1}(\beta)$ are represented by
relative $s_0$-cocycles of the form $a+A$ and $b+B$ with $\pgh\,
A>0$ and $\pgh\,B>0$. To prove the equality in question we only
need to show that the resolution-degree-one part of the Poisson
bracket
\begin{equation}\label{PPB}
\{a+A,b+B\}=\{a,b\}+\{a,B\}+\{A,b+B\}
\end{equation}
is given by $\{a,b\}$. The equality $\deg\, \{a,b\}=1$ is obvious
as $a$ and $b$ belong to the Lie algebra $\mathcal{Z}$, see Sec.
\ref{interp}. What is left is to show  that the resolution degree
of the other two terms in (\ref{PPB}) is greater than 1. Applying
(\ref{VU}) we can write $\{a, B\}=V_aB+U_aB$, where
\begin{equation}\label{ineq}
\pgh\, V_aB\geq \pgh\, B>0\,,\qquad \deg\,U_aB\geq \deg\,B>0\,.
\end{equation}
On the other hand,  for any  $[c]\in H^g_{g+1}(s_0|d)^\Delta$ we
have
$$\deg\, c=\pgh\, c+1\,.$$
In particular, this equality is true for (each term of) the
Poisson bracket (\ref{PPB}). Combining the last fact with the
second inequality (\ref{ineq}) we get
$$
\pgh\,U_aB\geq \pgh\, B>0\,.
$$
Together with the first inequality (\ref{ineq}) this yields
$\pgh\, \{a,B\}\geq \pgh \, B
> 0$. In the same way one can verify  that $\pgh\, \{A,b+B\}\geq
\pgh\, A>0$.

\end{document}